\newcommand{\fv}[1]{\textcolor{blue}{ {\textbf{(Full Version:}
#1\textbf{) }}}}
\renewcommand{\fv}[1]{#1}
\newcommand{\av}[1]{\textcolor{blue}{ {\textbf{(Extended abstract Version:}
#1\textbf{) }}}}
\renewcommand{\av}[1]{#1}
\renewcommand{\av}[1]{}
\mathchardef\mhyphen="2D %hyphen used in math mode
\newtheorem{theorem}{Theorem}
\newtheorem{lemma}[theorem]{Lemma}
\newtheorem{definition}[theorem]{Definition}
\newtheorem{conjecture}[theorem]{Conjecture}
\newenvironment{proof-sketch}{\noindent{\bf Sketch of Proof:}\hspace*{1em}}{\qed\bigskip}
\newenvironment{proof-idea}{\noindent{\bf Proof Idea:}\hspace*{1em}}{\qed\bigskip}
\newenvironment{proof-of-lemma}[1]{\noindent{\bf Proof of Lemma #1:}\hspace*{1em}}{\qed\bigskip}
\newenvironment{proof-of-proposition}[1]{\noindent{\bf Proof of Proposition #1:}\hspace*{1em}}{\qed\bigskip}
\newenvironment{proof-attempt}{\noindent{\bf Proof Attempt:}\hspace*{1em}}{\qed\bigskip}
\newcommand{\tensor}{\otimes}
\newcommand{\bra}[1]{\langle #1|}
\newcommand{\ket}[1]{|#1\rangle}
\newcommand{\braket}[2]{\langle #1|#2\rangle}
\newcommand{\ketbra}[2]{\ket{#1}{\bra{#2}}}
\renewcommand{\P}{\textsf{P}}
\newcommand{\NP}{\textsf{NP}}
\newcommand{\NPC}{\textsf{NP}\mhyphen\textsf{complete}}
\newcommand{\NPH}{\textsf{NP-Hard}}
\newcommand{\QMA}{\textsf{QMA}}
\newcommand{\QMAC}{\textsf{QMA-complete}}
\newcommand{\QMAH}{\textsf{QMA-hard}}
\newcommand{\QMAo}{\textsf{QMA$_1$}}
\newcommand{\QMAoC}{\textsf{QMA$_1$-complete}}
\newcommand{\QMAoH}{\textsf{QMA$_1$-hard}}
\newcommand{\lh}{\textsc{local hamiltonian}}
\newcommand{\klh}{k \mhyphen \textsc{local hamiltonian}}
\newcommand{\sat}{\textsc{sat}}
\newcommand{\ksat}{k\mhyphen\textsc{sat}}
\newcommand{\kgsat}[1][\gamma]{(k,#1)\mhyphen\textsc{sat}}
\newcommand{\kssat}[1][s]{(k,#1)\mhyphen\textsc{sat}}
\newcommand{\qsat}{\textsc{qsat}}
\newcommand{\kqsat}{k\mhyphen\textsc{qsat}}
\newcommand{\ksqsat}[1][s]{(k,#1)\mhyphen\textsc{qsat}}
\newcommand{\ql}{l^{*}}
\newcommand{\qf}{f^{*}}
\newcommand{\qg}{g^{*}}
\newcommand{\qh}{h^{*}}
\newcommand{\onote}[1]{\textcolor{blue}{ {\textbf{(Or:}
#1\textbf{) }}}}
\renewcommand{\onote}[1]{}
\definecolor{gray}{rgb}{0.5,0.5,0.5}
\title{An Almost Sudden Jump in Quantum Complexity
\av{%
\\ \vspace{2 mm} {\large Extended abstract\footnote{A full version preprint  with the same title is available on the arXiv.}}
}%
}
 \author{Or Sattath
 \thanks{School of Computer Science and Engineering,
 The Hebrew University, Jerusalem, Israel. }
 }
\begin{document}
%\setpagewiselinenumbersd
% \linenumbers
%\bigskip

\maketitle
%CHECK move abstract above the begin{document}
\begin{abstract}
The Quantum Satisfiability problem ($\qsat$) is the generalization of the canonical $\NPC$ problem - Boolean Satisfiability. 
$\ksqsat$ is the following variant of the problem: given a set of projectors of rank $1$, acting non-trivially on $k$ qubits out of $n$ qubits, such that each qubit appears in at most $s$ projectors, decide whether there exists a quantum state in the null space of all the projectors. 
Let $\qf(k)$ be the maximal integer $s$ such that every $\ksqsat$ instance is satisfiable. Deciding $\ksqsat[\qf(k)]$ is computationally easy: by definition the answer is ``satisfiable''. But, by relaxing the conditions slightly, we show that $\ksqsat[\qf(k)+2]$ is $\QMAoH$, for $k \geq 15$. This is a quantum analogue of a classical result by Kratochv{\'\i}l et al. \cite{kratochvil1993one}.
We use the term ``an \emph{almost} sudden jump'' to stress that the complexity of $\ksqsat[\qf(k)+1]$ is open, where the jump in the classical complexity is known to be sudden. 

We present an implication of this finding to the quantum PCP conjecture, arguably one of the most important open problems in the field of Hamiltonian complexity. Our implications impose constraints on one possible way to refute the quantum PCP.

\end{abstract}

\section{Introduction}
The quantum Satisfiability problem, introduced by Bravyi~\cite{bravyi2006efficient}, generalizes the Boolean Satisfiability problem ($\sat$) to the quantum setting. The $3 \mhyphen \qsat$ problem \fv{(see Definition \ref{def:qsat})} is $\QMAoC$~\cite{gosset2013quantum} \fv{(improving previous results~\cite{bravyi2006efficient,eldard2008quantum})} and $2 \mhyphen \qsat \in \P$~\cite{bravyi2006efficient}.
$\QMAo$ is a quantum generalization of the class $\NP$.It differs from the more familiar class $\QMA$ in having a one-sided error.  
\fv{%
See Ref.~\cite{aharonov2002quantum} for a survey of $\QMA$,~\cite{osborne2012hamiltonian} for a broader discussion, and Ref.~\cite{bookatz2012qma} for a thorough description of $\QMAoC$ and $\QMAC$ problems. 
}

A $\qsat$ instance, in which each projector acts non-trivially on exactly $k$ qubits, has rank $r$ on these qubits, and each qubit appears in at most $s$ projectors, is called a $(k \text{ local},\text{rank } r, \text{degree }s )$ instance. We define $\ksqsat$ to be the $\qsat$ problem restricted to $(k \text{ local},\text{rank } 1, \text{degree }s)$ instances \fv{(see Definition~\ref{def:ksqsat})}. We have removed the rank-1 notation, since this is the only case of interest here. The reason for focusing on rank-1 instances is the analogy with $\ksat$: in a $\ksat$formula, each clause excludes one out of $2^{k}$ configurations of the relevant variables; In rank-1 $\kqsat$,  each projector excludes one dimension out of the $2^{k}$ dimensions. 

The following definition is essential for the rest of this work.
\begin{equation}
\qf(k) \equiv \max\{s\in \mathbb{N}\ | \text{ all } \ksqsat \text{ instances are satisfiable} \} .
\label{def:qfk}
\end{equation}
Our main result is the following:

%\begin{theorem} Let $k \geq 15$. There exists a constant $\qf(k)$ (defined in Eq.~\eqref{def:qfk}) such that:
%\begin{enumerate}
%\item All $\ksqsat[\qf(k)]$ instances are satisfiable.
%\item $\ksqsat[\qf(k)+2]$ is $\QMAoC$.
%\end{enumerate}
%\label{thm:main_theorem}
%\end{theorem}

\begin{theorem} For $k \geq 15$, $\ksqsat[\qf(k)+2]$ is $\QMAoC$.
\label{thm:main_theorem}
\end{theorem}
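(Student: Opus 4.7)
The proof splits into $\QMAo$ membership and $\QMAo$-hardness. Membership is immediate: $\ksqsat[\qf(k)+2]$ is a syntactic restriction of $k\mhyphen\qsat$, and the standard $\QMAo$ verifier---sample a random rank-$1$ $k$-local projector from the instance and measure it against the prover's witness---yields perfect completeness and bounded soundness for any constant $k\ge 1$.

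For hardness, the plan is a parsimonious polynomial-time reduction from $3\mhyphen\qsat$, which is $\QMAoH$ by Gosset--Nagaj~\cite{gosset2013quantum}, to $\ksqsat[\qf(k)+2]$. The central tool is a \emph{degree-reduction gadget}: each qubit $q$ of the input that appears in $d$ projectors is replaced by $d$ fresh copies $q_1,\ldots,q_d$, distributed one per original projector, and the copies are linked by a chain of $d-1$ ``equality'' projectors. Each link is the rank-$1$ $k$-local operator $\ketbra{\phi}{\phi}$ with $\ket{\phi}=\bigl(\ket{01}-\ket{10}\bigr)/\sqrt{2}\otimes\ket{0^{k-2}}$, where the singlet acts on two consecutive copies $q_i,q_{i+1}$ and the remaining $k-2$ slots are fresh ancillas. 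A zero-energy state of the chain is forced into a two-dimensional logical subspace on which the copies behave as a single qubit, so satisfiability is preserved exactly in both directions. Prior to applying the gadget, the original $3\mhyphen\qsat$ projectors are padded to $k$-local rank-$1$ by tensoring with fresh ancillas fixed to $\ket{0}$. A straightforward count gives each copy degree at most $1+2=3$ and each fresh ancilla degree $1$, both safely within $\qf(k)+2$.

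\textbf{Main obstacle.} The essential difficulty---and the reason the bound is $+2$ rather than $+1$---is that a rank-$1$ quantum equality constraint linking $q_i$ and $q_{i+1}$ is inherently two-sided: it charges one unit of degree on \emph{both} endpoints simultaneously, whereas the classical Kratochv\'il et al.\ argument can use a one-sided implication clause that costs an extra degree only on $q_{i+1}$. I expect the main technical work to split into (i) verifying zero-error completeness and soundness of the chain gadget under the $\QMAo$ convention, since any positive slack would destroy the perfect completeness demanded by $\QMAo$, and (ii) a careful arithmetic bookkeeping relating the degree contributions of copies, ancillas, and padded originals to the quantum Lov\'asz Local Lemma lower bound on $\qf(k)$; the constant $k\ge 15$ should emerge as the smallest integer for which all local budgets---including the interaction between the padding and the equality chains---simultaneously fit inside $\qf(k)+2$.
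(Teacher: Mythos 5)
Your membership argument is fine, but the hardness reduction has a fatal, self-refuting flaw. You claim the instance your reduction outputs has every copy at degree at most $3$ and every ancilla at degree $1$. But by the quantum Lov\'asz Local Lemma bound $\qf(k)\geq\lfloor 2^{k}/(ek)\rfloor$ (Eq.~\eqref{eq:bounds_f_qf}), for $k\geq 15$ \emph{every} rank-1 $k$-local instance of degree at most $803\gg 3$ is satisfiable. So every instance your reduction produces is a ``yes'' instance, and the reduction cannot be sound. This is not a bookkeeping issue: any correct hardness proof must output instances whose degree actually reaches at least $\qf(k)+1$, which is exactly why the paper's construction needs a non-constructive ingredient you are missing --- a \emph{minimal unsatisfiable} $\ksqsat[\qf(k)+1]$ instance $R$ (guaranteed to exist by the definition of $\qf(k)$ but not known to be computable), used as an ``enforcing gadget'' that pins dummy qubits to $\ket{0}$ while itself having degree $\qf(k)+2$.

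The specific gadget you propose also fails on its own terms, for the reason the paper flags as the central obstruction (``imposing equality between qubits is not well defined in the quantum setting''). First, the null space of a chain of singlet projectors $\ketbra{\phi^{-}}{\phi^{-}}_{q_i,q_{i+1}}$ on $d$ copies is the fully symmetric subspace, of dimension $d+1$, not a two-dimensional logical qubit. Second, even if you added constraints cutting this down to $\mathrm{span}\{\ket{0^{d}},\ket{1^{d}}\}$, completeness fails by no-cloning: if the original satisfying state is $\sum_x c_x\ket{x}_q\tensor\ket{\psi_x}$ with $q$ entangled with the rest, the encoded state $\sum_x c_x\ket{x}^{\tensor d}\tensor\ket{\psi_x}$ is annihilated by a projector $\ketbra{\eta}{\eta}$ moved onto copy $q_i$ only if each term $c_x\braket{\eta_x}{\psi_x}$ vanishes separately (the unused copies $\ket{x}^{\tensor(d-1)}$ are orthogonal witnesses to $x$), not merely their sum. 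The paper avoids copying altogether: it starts from the $\QMAoH$ instance of $\qsat$ on a line (which already has constant locality $k'=8$ and constant degree $\leq 510$ after converting qudits to qubits), pads each projector with dummy qubits via $\Pi\tensor\ketbra{0}{0}$, and enforces each dummy to be $\ket{0}$ with the gadget built from $R$; the threshold $k\geq 15$ comes from requiring $\qf(k)+2\geq 510$, not from your gadget arithmetic.
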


We say that there is a jump in complexity, because deciding $\ksqsat$ instances with $s \leq \qf(k)$ is computationally easy (by Eq.~\eqref{def:qfk} the answer is always ``satisfiable''); yet, deciding instances with degree $s$ bigger by 2 is $\QMAoH$.

\av{The above theorem is a quantum analogue of the following classical theorem:}
\fv{
The above theorem is a quantum analogue of a similar classical theorem, which requires the parallel definitions.
A $\sat$ instance in Conjunctive Normal Form (CNF), in which each clause contains exactly $k$ different variables, and each variable appears in at most $s$ clauses, is called a $(k \text{ CNF},\text{degree }s)$ formula. Let $\kssat$ be the problem $\sat$ restricted to $(k\text{ CNF}, \text{degree }s)$ formulas.
The classical analogue of $\qf(k)$ is defined as follows:

\[f(k)=\max\{s \in \mathbb{N} | \text{all } \kssat \text{-formulas are satisfiable} \}.\] 
}
%\begin{theorem}[\cite{kratochvil1993one}] Let $k \geq 3$. There exists a constant $f(k)$ such that:
%\begin{enumerate}
%\item All $\kgsat[f(k)]$ instances are satisfiable.
%\item $\kgsat[f(k)+1]$ is $\NPC$.
%\end{enumerate}
%\label{thm:kratochvil}
%\end{theorem}

\begin{theorem}[\cite{kratochvil1993one}] For $k \geq 3$, $\kgsat[f(k)+1]$ is $\NPC$.
\label{thm:kratochvil}
\end{theorem}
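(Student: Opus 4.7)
The plan is to reduce a standard $\NPC$ variant of $\sat$ (for concreteness, $k\mhyphen\textsc{sat}$ with unbounded variable degree, which is $\NPC$ for $k\geq 3$) to $\kgsat[f(k)+1]$, and to note that $\kgsat[f(k)+1]\in\NP$ trivially by guessing an assignment. The crucial ingredient on the $\NP$-hardness side is the existence of some unsatisfiable $(k\text{ CNF},\text{degree }f(k)+1)$ formula $\phi_0$, which is forced by the maximality built into the definition of $f(k)$: if every such formula were satisfiable, $f(k)$ would not be maximal.

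The first real step is to turn $\phi_0$ into a reusable \emph{enforcer gadget} $E(z_1,z_2)$ acting on two ``port'' variables $z_1,z_2$ and some fresh internal variables. The idea is to take $\phi_0$, locate a literal $\ell$ whose omission restores satisfiability, and identify the variable of $\ell$ with the ports in a controlled way so that $E(z_1,z_2)$ is satisfiable exactly when $z_1\Rightarrow z_2$ (or, in a symmetric version, when $z_1 = z_2$), while each port appears only $O(1)$ times inside $E$, and every internal variable keeps degree $\leq f(k)+1$.

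The reduction itself then takes a $k\mhyphen\textsc{sat}$ instance $\phi$, replaces each variable $x$ that appears more than $f(k)+1$ times by fresh copies $x^{(1)},\ldots,x^{(m)}$, one per occurrence, and chains them around a directed cycle of enforcers $E(x^{(1)},x^{(2)}),E(x^{(2)},x^{(3)}),\ldots,E(x^{(m)},x^{(1)})$ so that any satisfying assignment must agree on all copies. Counting degrees, each $x^{(i)}$ appears once in the rewritten original clause plus a bounded number of times in the two adjacent enforcers, which for $k\geq 3$ fits within $f(k)+1$; internal gadget variables never cross between gadgets, so their degree is governed by $E$ alone. Satisfiability is preserved in both directions, since the cycle is satisfiable on any uniform assignment of the copies and forces equality on them.

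The hard part is the construction and analysis of $E$: it must respect the degree budget $f(k)+1$ \emph{including} the contribution at the ports, admit a satisfying extension whenever the ports are compatible, and be unsatisfiable when they are not. Extracting all three of these properties simultaneously from the black-box witness $\phi_0$ is the technical core of Kratochv{\'\i}l, Savick{\'y}, and Tuza's argument, and one expects exactly this obstacle to reappear, enlarged, in the quantum analogue of Theorem~\ref{thm:main_theorem}, where the rigidity of rank-$1$ projector gadgets is what pushes the required clause width from $k\geq 3$ up to $k\geq 15$.
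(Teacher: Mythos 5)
There is a genuine gap, and it sits exactly where you park it: the enforcer $E(z_1,z_2)$ is never constructed, and the route you indicate for building it does not produce such an object. What an unsatisfiable $\kgsat[f(k)+1]$ formula naturally yields --- after first passing to a \emph{minimal} unsatisfiable subformula, a step you omit and which is essential (an arbitrary unsatisfiable $\phi_0$ need not contain any clause whose removal restores satisfiability) --- is a single-port \emph{value forcer}: delete one literal $\ell$ from one clause $C$ and put a fresh variable $y$ in its place. If $y$ is false, the modified formula is at least as constrained as $\phi_0$ and hence unsatisfiable; if $y$ is true, the clause is satisfied and minimality of $\phi_0$ supplies a satisfying assignment of the rest. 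So $y$ is forced to true while having degree $1$ inside the gadget, and all internal degrees stay $\leq f(k)+1$ since deleting $\ell$ only lowers them. (Note your phrase ``a literal whose omission restores satisfiability'' has the logic backwards: omitting a literal makes a clause strictly harder to satisfy; it is omitting, or satisfying, a whole clause that restores satisfiability.) Nothing in this mechanism relates \emph{two} external port variables, and ``identifying the variable of $\ell$ with the ports in a controlled way'' is not a construction. Moreover, even granting a two-port implication gadget, its clauses must each contain exactly $k$ distinct variables, whereas an implication $\bar z_1 \vee z_2$ is a $2$-clause; widening it to width $k$ without changing its meaning is precisely what the single-port forcer is for (pad with $k-2$ dummy literals, each forced to its non-satisfying value). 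Your reduction never performs this widening, so the equality chains you wrap around high-degree variables are not legal clauses of the target problem.

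The argument sketched in this paper's comparison paragraph (and in Kratochv\'{\i}l et al.) keeps the two concerns you conflate separate: first reduce $3\mhyphen\sat$ to bounded-degree $3\mhyphen\sat$ by the standard copy-and-chain trick using ordinary short clauses, with no use of $\phi_0$; then widen every clause to width exactly $k$ by appending dummy variables, each pinned by its own copy of the forcer gadget. The degree accounting then closes: original variables have small constant degree, each dummy has degree $2$, and gadget-internal variables have degree at most $f(k)+1$. By contrast, your port-degree claim (``a bounded number of times \ldots fits within $f(k)+1$'') is asserted rather than verified and is genuinely tight at $k=3$, where $f(3)+1=4$ leaves room for each copy to occur once in its clause and essentially once in each adjacent gadget. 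As written, the proposal is a plan whose central gadget is neither built nor buildable by the route described.
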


We use the term ``an almost sudden jump''\footnote{The term ``sudden jump in complexity'' was coined by Gebauer, Szab{\'o} and Tardos in Ref.~\cite{gebauer2010local}.} in the title to stress that the complexity of $\ksqsat[\qf(k)+1]$ is still open (unlike the truly sudden jump in the classical case).
\fv{%
We conjecture that $\ksqsat[\qf(k)+1]$ is $\QMAoC$, and that the reason for the divergence between the classical and quantum case is due to the technical limitations of our proof technique.
}

\paragraph{An overview of the proof of Theorem~\ref{thm:main_theorem}.} 

It is already known that $\kqsat \in \QMAo$ for any $k$~\cite{bravyi2006efficient}, which implies that also $\ksqsat \in \QMAo$.
For reasons which will be elucidated shortly, we wish to start the hardness reduction with a $\qsat$ instance $Q$ which is promised to have the smallest possible degree and locality; this is achieved by the hardness result of $\qsat$ on a line~\cite{aharonov_power_2007}.\footnote{Note that the original $\QMA$ hardness result by Bravyi~\cite{bravyi2006efficient} does not provide an upper bound on the degree, and therefore cannot be used.} 

Next, every $\Pi \in Q$ is replaced with $\Pi' = \Pi \tensor  \ket{0}\bra{0}$, where the last qubit is denoted the dummy qubit, and a new dummy qubit is used for every projector. If the dummy qubit could be enforced to be in the $\ket{0}$ state, the satisfiability of the instance would not be affected. In order to achieve this, we use a $\qsat$ \emph{enforcing gadget}, which has the following properties: (i) the only way to satisfy the gadget is by having the dummy qubit in the $\ket{0}$ state, (ii) the degree of the gadget is at most $\qf(k) + 2$, (iii) the degree of the dummy qubit is $2$. Replacing $\Pi$ with $\Pi'$, and adding the enforcing gadget, has the following three effects: the satisfiability of the instance is unaffected; the locality increases by 1; the degree changes to the maximum between the original degree and $\qf(k) + 2$.

By repeating this argument, we can increase the initial locality to some large enough $k$, while keeping the degree at most the maximum between the original degree and $\qf(k) + 2$.
%\fv{%
Note that $\qf(k)$ grows roughly exponentially in $k$ (\fv{see Theorem~\ref{thm:bound_qf}}\av{see Eq.~\eqref{eq:bounds_f_qf}}).
%}%
%\av{%
%It can be shown that $\qf(k)$ grows roughly exponentially in $k$ by using the Quantum Lov\'asz Local Lemma~\cite{ambainis2010quantum}, see Eq.~\eqref{eq:bounds_f_qf} shortly.
%}%
For $k$ large enough,  $\qf(k)+2$ will be larger than the initial degree, hence the final degree is $\qf(k)+2$. 
Therefore, this transformation is useful if the original instance has a small locality and degree which explains our initial choice of these parameters.

\fv{
A comparison between the proofs of Theorem~\ref{thm:main_theorem} and Theorem~\ref{thm:kratochvil} is given at the end of Section~\ref{sec:proof}.
}

\paragraph{Known results about $\qf(k)$ and $f(k)$.}
\fv{%
 Every $\kssat$ formula is also a $\ksqsat$ instance, therefore 
\begin{equation}
\qf(k) \leq f(k).
\end{equation}
The Lov\'asz Local Lemma~\cite{erdos1975problems,alon2004probabilistic} was used to lower bound $f(k)$.
\begin{theorem}[\cite{kratochvil1993one}]
$ f(k) \geq \lfloor \frac{2^{k}}{ek} \rfloor  $.
\label{thm:bound_fk}
\end{theorem}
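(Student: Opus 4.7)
The natural approach is a direct application of the symmetric Lovász Local Lemma to a uniformly random truth assignment. Let $\phi$ be a $(k\text{ CNF},\text{degree }s)$ formula with $s = \lfloor 2^{k}/(ek) \rfloor$, and assign each variable independently and uniformly at random to True/False. For each clause $C_i$, let $A_i$ denote the bad event that $C_i$ is not satisfied by the random assignment. Since $C_i$ contains exactly $k$ distinct variables and rules out a single assignment to them, $\pr[A_i] = 2^{-k} =: p$.

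Next I would bound the dependency degree. The event $A_i$ depends only on the variables appearing in $C_i$, hence is mutually independent of the family $\{A_j : C_j \cap C_i = \emptyset\}$. Each of the $k$ variables of $C_i$ appears in at most $s$ clauses (one of which is $C_i$ itself), so $C_i$ shares variables with at most $k(s-1)$ other clauses. Thus the dependency graph has maximum degree $d \leq k(s-1)$.

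The symmetric LLL guarantees $\pr\!\left[\bigcap_i \overline{A_i}\right] > 0$ provided $e\,p\,(d+1) \leq 1$. Plugging in the bounds gives the sufficient condition
\[
e \cdot 2^{-k} \cdot (k(s-1)+1) \leq 1,
\]
which, since $k(s-1)+1 \leq ks$, is implied by $eks \leq 2^{k}$, i.e., $s \leq 2^{k}/(ek)$. This holds for our choice $s = \lfloor 2^{k}/(ek) \rfloor$, so the bad events avoid each other with positive probability, and therefore a satisfying assignment for $\phi$ exists. Since $\phi$ was arbitrary, every $(k\text{ CNF},\text{degree }s)$ formula with $s \leq \lfloor 2^{k}/(ek) \rfloor$ is satisfiable, which is exactly $f(k) \geq \lfloor 2^{k}/(ek) \rfloor$.

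There is essentially no obstacle in this argument beyond correctly bounding the dependency degree and matching constants; the non-trivial ingredient is entirely contained in the LLL itself. A minor subtlety worth highlighting is the quantity $k(s-1)$ (not $ks$) for the dependency degree, which gives a slight slack that lets the floor in $\lfloor 2^{k}/(ek)\rfloor$ be absorbed cleanly without an off-by-one loss.
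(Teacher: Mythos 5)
Your proof is correct and is exactly the argument the paper relies on: the statement is quoted from Kratochv{\'\i}l et al., and the paper explicitly attributes the bound to the symmetric Lov\'asz Local Lemma applied to a uniform random assignment, with the same probability $p=2^{-k}$ and dependency-degree bookkeeping $d\leq k(s-1)$ that you carry out. No further comment is needed.
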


One of the merits of the Quantum Lov\'asz Local Lemma is that it provides the same lower bound in the quantum setting.
 \begin{theorem}[\cite{ambainis2010quantum}]
$\qf(k)\geq \lfloor \frac{2^{k}}{ek} \rfloor $.
\label{thm:bound_qf}
\end{theorem}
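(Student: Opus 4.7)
The plan is to obtain the bound as a direct application of the Quantum Lov\'asz Local Lemma (QLLL) of Ambainis, Kempe and Sattath~\cite{ambainis2010quantum}. In the form relevant here, the QLLL asserts the following: for a collection of projectors $\{\Pi_i\}_{i=1}^m$ on $n$ qubits with dependency graph $G$ (an edge $i \sim j$ whenever $\text{supp}(\Pi_i) \cap \text{supp}(\Pi_j) \neq \emptyset$), if there exist reals $x_i \in [0,1)$ satisfying
$$\frac{\text{rank}(\Pi_i)}{2^{|\text{supp}(\Pi_i)|}} \;\leq\; x_i \prod_{j \sim i}(1 - x_j) \qquad \text{for all } i,$$
then $\bigcap_i \ker(\Pi_i) \neq \{0\}$, i.e., the $\qsat$ instance is satisfiable.

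Next I would specialize to an arbitrary $\ksqsat$ instance with $s = \floor{2^k/(ek)}$. Every projector has rank $1$ and support of size exactly $k$, so the left-hand side equals $2^{-k}$. Since each of the $k$ qubits in $\text{supp}(\Pi_i)$ appears in at most $s-1$ \emph{other} projectors, the dependency degree of $\Pi_i$ is at most $d := k(s-1)$. Choosing all $x_i$ equal to a common value $x$, the QLLL hypothesis collapses to the scalar inequality $2^{-k} \leq x(1-x)^d$. Taking the standard choice $x = 1/(d+1)$ and using the elementary bound $(1 - 1/(d+1))^d \geq 1/e$, it suffices to verify $e(k(s-1)+1) \leq 2^k$. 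This follows from $s \leq 2^k/(ek)$ (hence $eks \leq 2^k$) together with $k \geq 1$ (which gives $-ek + e \leq 0$). Hence every such instance is satisfiable, proving $\qf(k) \geq \floor{2^k/(ek)}$.

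The main obstacle in this route is the QLLL itself, which is the core contribution of~\cite{ambainis2010quantum}; once that tool is available, the present derivation runs in lock-step with the classical LLL argument used for Theorem~\ref{thm:bound_fk}, with probabilities of bad events replaced by relative dimensions of forbidden subspaces. The genuinely quantum difficulty packaged inside the QLLL is that there is no joint probability distribution to condition on; instead, one must track how the ``surviving'' subspace shrinks as projectors with overlapping supports are imposed in sequence, requiring an induction over the dependency graph together with a subspace-dimension analogue of the classical chain-rule bound.
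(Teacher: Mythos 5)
Your derivation is correct and is exactly the standard argument: the paper itself gives no proof of Theorem~\ref{thm:bound_qf}, citing it directly from~\cite{ambainis2010quantum}, where it is obtained precisely as you describe, by applying the symmetric form of the Quantum Lov\'asz Local Lemma with bad-event relative dimension $2^{-k}$ and dependency degree $k(s-1)$. Your arithmetic verifying $e(k(s-1)+1)\leq 2^{k}$ from $s\leq 2^{k}/(ek)$ is sound, so nothing further is needed.
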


Gebauer et al. improved the classical lower bound using the Lopsided Lov\'asz Local Lemma~\cite{erdos1991lopsided}, and also proved that it is asymptotically tight.
\begin{theorem}[\cite{gebauer2010local}]
$  \lfloor \frac{2^{k+1}}{e(k+1)} \rfloor \leq f(k) \leq \frac{2^{k+1}}{ek}\left(1 + O(\frac{1}{\sqrt{k}}) \right) $.
\label{thm:gebauer}
\end{theorem}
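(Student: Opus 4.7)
My plan is to prove the two inequalities separately, since they rely on genuinely different techniques: the lower bound refines the Local Lemma argument of Theorem~\ref{thm:bound_fk}, while the upper bound demands an explicit or probabilistic construction of unsatisfiable $k$-CNFs of low maximum degree.

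For the lower bound $f(k)\geq \lfloor 2^{k+1}/(e(k+1))\rfloor$, I would start from the natural probabilistic setup: pick a uniformly random truth assignment, let $A_C$ be the event that clause $C$ is violated (so $\Pr[A_C]=2^{-k}$), and note that the symmetric LLL with dependency degree at most $k(s-1)$ gives only the weaker bound $f(k)\geq 2^k/(ek)$ of Theorem~\ref{thm:bound_fk}. To gain an extra factor of $2$, I would switch to the \emph{Lopsided} Local Lemma~\cite{erdos1991lopsided}, exploiting the following sign-based monotonicity of the bad events: if two clauses $C,C'$ share a variable $x$ with the \emph{same} polarity, then $A_C$ and $A_{C'}$ each demand that $x$ take the unique falsifying value, so the events are positively correlated and the lopsided hypothesis $\Pr[A_C\mid \bigcap_{D\in S}\overline{A_D}]\leq \Pr[A_C]$ is preserved without listing $C'$ as a neighbor of $C$ in the lopsidependency graph. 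Only variables shared with opposite polarities need to become edges. The subtle step is amortizing the resulting factor-of-two saving in the worst case (where a single clause could have all shared occurrences of opposite sign); my plan would be to do this via either (a) a Moser--Tardos entropy-compression witness-tree argument that tracks sign disagreements, or (b) an asymmetric LLL with per-clause probabilities that charges each clause half of each shared-variable interaction. Either route, once executed, plugs an effective $d=k(s-1)/2$ into $e\cdot 2^{-k}\cdot(d+1)\leq 1$ and yields the desired $s\leq 2^{k+1}/(e(k+1))$.

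For the upper bound $f(k)\leq (2^{k+1}/(ek))(1+O(1/\sqrt{k}))$, the task is to exhibit, for every $k$, an unsatisfiable $(k,s)$-CNF with $s$ matching the lower bound asymptotically. A naive random construction sampling $m$ i.i.d.\ $k$-clauses on $n$ variables needs $m\approx n\,2^k\ln 2$ for the union bound over all $2^n$ assignments to certify unsatisfiability, which yields a maximum variable degree of order $k\,2^k$ -- off by a factor $\Theta(k^2)$. My plan is instead a two-scale construction: fix an unsatisfiable ``atom'' formula on $k$ variables (for instance, the complete set of $2^k$ clauses on a $k$-set) and take many overlapping copies of it, distributing their variable sets across a large ground set via a carefully designed random hypergraph so that (i) every global assignment violates at least one atom, and (ii) each ground variable participates in at most $s$ atoms, with the $\sqrt{\log}$ correction arising from Chernoff concentration of the per-vertex load. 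Balancing (i) and (ii) at the LLL threshold is what pins the leading constant $2/e$. The main obstacle -- and the delicate technical contribution of~\cite{gebauer2010local} -- is that hitting the \emph{exact} constant, rather than a worse constant of the same order of magnitude, requires the covering density to be almost optimal and the per-variable load to be almost perfectly balanced simultaneously; without importing their specific design, I would expect to recover the correct $\Theta(2^k/k)$ scaling but possibly a worse leading constant.
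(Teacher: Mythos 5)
The paper does not actually prove this statement: Theorem~\ref{thm:gebauer} is imported verbatim from~\cite{gebauer2010local} and used as a black box (it only feeds into Eq.~\eqref{eq:bounds_f_qf} and the discussion of the lopsided lemma among the open problems), so there is no in-paper proof to compare against and your attempt must stand on its own. As written it does not: both halves of your plan stop exactly at the step that constitutes the real content of~\cite{gebauer2010local}. For the lower bound, your identification of the lopsidependency graph is correct (same-polarity sharing makes $A_C$ and $\overline{A_{C'}}$ negatively correlated, so only opposite-polarity sharing needs an edge), but, as you yourself observe, in the worst case every other occurrence of every variable of a clause can be of opposite polarity, so the lopsidependency degree is not halved and the naive count collapses back to the $2^k/(ek)$ bound of Theorem~\ref{thm:bound_fk}. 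The two remedies you list are named but not executed, and neither is obviously correct as stated: the known fix for unbalanced formulas is not an amortization over a fixed uniform assignment but a change of the product measure itself (biasing each variable according to its positive and negative occurrence counts, so that $\Pr[A_C]$ is no longer $2^{-k}$ and the asymmetric lopsided lemma is applied with per-clause weights). Without that ingredient you only get the factor-of-two improvement for balanced instances --- which is precisely the distinction this paper draws in its open-problems discussion.

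For the upper bound the gap is even more explicit: you concede that your two-scale random covering would likely recover only the correct order $\Theta(2^k/k)$ with a possibly worse leading constant. But the theorem's whole point is the leading constant $2/e$ matching the lopsided lower bound (this is what ``asymptotically tight'' means here); a construction achieving, say, $4\cdot 2^k/(ek)$ proves a different and weaker statement. The actual construction of Gebauer et al.\ is a deterministic binary-tree gadget engineered so that every branch of the tree is killed by some clause while the per-variable load stays within a $1+O(1/\sqrt{k})$ factor of the threshold; a union bound over random atoms does not obviously reach that density. So both inequalities have a genuine, acknowledged gap at the decisive step, and the proposal should be regarded as a correct high-level orientation rather than a proof.
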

Although these bounds are satisfying asymptotically, it is unknown whether the functions $f(k)$ and $\qf(k)$ are computable~\cite{hoory2005computing}. 

%$ \lfloor \frac{2^{k}}{ek} \rfloor  \leq f(k) \leq 2^{k}(\frac{1}{e}+ O(k^{-1/2}) ) $.

All of these results, excluding Gebauer et al.'s lower bound, can be combined in the following way.
}%
\av{%
The main results known about $\qf(k)$ and $f(k)$ are summarized in the following equation. See the full version for more details.
 }

\begin{equation}
\label{eq:bounds_f_qf}
 %\lfloor \frac{2^{k}}{e} \rfloor -1  \leq \ql(k) \leq l(k)\leq 2^{k}(\frac{1}{e}+ O(k^{-1/2}) )
\lfloor \frac{2^{k}}{ek}  \rfloor  \leq \qf(k) \leq f(k) \leq \frac{2^{k+1}}{ek}\left(1 + O(\frac{1}{\sqrt{k}}) \right) 
\end{equation}
\fv{
The question whether $\qf(k) = f(k)$ is addressed in the open questions, at the end of this section.

Similar functions to $f(k)$ and $\qf(k)$ are $l(k)$ and $\ql(k)$. The \emph{neighborhood} of a clause is the number of clauses that share variables with it. $l(k)$ is the maximal integer $m$ such that every $\ksat$ instance in which the neighborhood of each clause is at most $m$, is satisfiable. Similarly, $\ql(k)$ is defined in the same manner in the quantum setting. An analogous result to Theorem~\ref{thm:kratochvil} where $l(k)$ takes the role of $f(k)$ was given in~\cite{gebauer2009lovasz}. Using a similar construction\footnote{The only change in the proof is that $R$ is replaced with a $\ksqsat[\ql(k)+1 \text{ neighborhood}]$ instance. $R$ is defined in Section~\ref{sec:proof}.} to the one used for proving Theorem~\ref{thm:main_theorem}, it can be shown that $\ksqsat[\ql(k)+3 \text{ neighborhood}]$ is $\QMAoC$, for $k \geq 11$.
}
\paragraph{Implication for the quantum PCP conjecture.}
Theorem \ref{thm:main_theorem} has an unexpected connection to the quantum Probabilistically Checkable Proof (qPCP) conjecture. 

\fv{%
We start by describing the connection imprecisely, and elaborate and discuss the caveats later. 
}%
\av{%
The following crude discussion presents the intuition, and the accurate details are presented in the full version.
}%

The qPCP conjecture states that approximating $e_{0}\equiv \lambda_{0}/m$  (the minimum eigenvalue of a $\kqsat$ instance, normalized by the number of projectors\fv{, see Definition~\ref{def:qsat}}), up to an additive constant, is $\QMAoH$.
\fv{%
A survey of the qPCP conjecture appears in~\cite{aharonov2013guest}.
}%

% Given a Hermitian matrix $H$, it is useful to define $\lambda_{0}(H)$ as the minimal eigenvalue of $H$. The quantum PCP conjecture claims that it is $\QMAoH$ to approximate $\lambda_{0}$ of a  $\qsat$ instance, up to a constant fraction of the number of the constraints. 

%\begin{conjecture}[Strong quantum PCP conjecture] There exists a $k , \epsilon > 0$ for which deciding whether a rank-1 $\kqsat$ instance $Q=\sum_{i=1}^{m} \Pi_{i}$, $Q$ is satisfiable (in other words, $\lambda_{0}(H) = 0$), or that $\lambda_{0}(Q) \geq \epsilon m$ is $\QMAoH$.
%\end{conjecture}

%\begin{conjecture}[Weak quantum PCP conjecture] There exists a $k , \epsilon > 0$, for a given k-local Hamiltonian $H=\sum_{i=1}^{m} H_{i}$, deciding whether $\lambda_{0}(H) \leq 0$, or that $\lambda_{0}(Q) \geq \epsilon m$ is $\QMAH$.
%\label{con:wqpcp}
%\end{conjecture}

%

Brand\~ao and Harrow proved\footnote{under the reasonable assumption that $\NP \neq \QMAo$} that it is not $\QMAoH$ to approximate $e_{0}$ up to any  additive constant, for $\ksqsat$ instances when $s$ is super constant, that is $s = \Omega(1)$. We also know that deciding $\ksqsat$ when $s$ is small enough is not $\QMAoH$. 

%Brandao and Harrow proved that a quantum PCP construction must have the property that the degree is not super constant.
These two facts suggest a way to refute the qPCP conjecture for a given $k$, by proving that there exists a constant $s_{0}$ for which:
\begin{enumerate}[(a)]
\item
Approximating $e_{0}$  for $\ksqsat$ instances up to any additive constant for $s > s_{0}$ is not $\QMAoH$.
\item Deciding $\ksqsat$ for $ s \leq s_{0}$ is not $\QMAoH$. 
\end{enumerate}

%The requirement for approximation is changed from a super constant degree to a constant degree that is at most some universal constant.

%Note that indeed finding such $s_{0}$, for which both 1. and 2. hold, clearly refutes the qPCP for the given $k$.

Though (a) is a strengthened version of the result by Brand\~ao and Harrow, it is not too strong since refuting the qPCP straightforwardly implies it.

Note that (a) refers to $s > s_{0}$ and (b) refers to $s \leq s_{0}$; therefore, their combination refutes the qPCP for the given k. 

The higher the value for $s_{0}$, the easier it is to prove (a). This would lead us to seek high values of $s_{0}$. Clearly, (b) holds for $s_{0} \leq \qf(k)$ from Eq.~\eqref{def:qfk}. But what about $s_{0}$ which is higher? Theorem~\ref{thm:main_theorem} contradicts (b)  for $s_{0} \geq \qf(k)+2$ for $k\geq 15$, therefore, further increasing $s_{0}$, would not succeed.

In other words, refuting the qPCP conjecture by exploiting the non-hardness of deciding $\ksqsat$ for low values of $s$ would not work for $s > \qf(k) + 1$. 

\fv{It is an interesting fact that in the classical setting, Berman, Karpinski and Scott proved a PCP theorem for $\textsc{max}\  (3,f(3)+1) \mhyphen \sat$ \cite{berman2003approximation}. \onote{how is this related to the previous paragraph?}
}

\fv{

We now describe the connection with more rigor and detail. 

\begin{conjecture}[The regular qPCP for $\qsat$] There exist universal constants $k , \epsilon > 0$ for which deciding whether a rank-1 regular $\kqsat$  instance $Q$ is satisfiable (in other words, $e_{0}(Q) = 0$), or $e_{0}(Q) \geq \epsilon$ is $\QMAoH$.
\label{con:qpcp_qsat}
\end{conjecture}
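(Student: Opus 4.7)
The plan is to derive the conjecture from a (currently conjectural) degree-unbounded quantum PCP theorem via a gap-preserving regularization, in parallel with the classical expander-replacement step at the heart of Dinur's PCP proof. Assume first that there exist constants $k_0$ and $\epsilon_0 > 0$ such that distinguishing rank-1 $k_0\mhyphen\qsat$ instances $Q$ with $e_{0}(Q) = 0$ from those with $e_{0}(Q) \geq \epsilon_0$ is $\QMAoH$, without any degree restriction. The job is then to turn $Q$ into an $s$-regular rank-1 $k\mhyphen\qsat$ instance $Q'$ whose promise gap is smaller by at most a constant factor, so that the hardness and the constant $\epsilon$ both survive.

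First I would replace each qubit $q$ of $Q$ appearing in $d$ projectors by a cloud of $d$ fresh copies $q_1,\dots,q_d$, redirecting the $i$-th occurrence of $q$ to $q_i$. To tie each cloud together I would add, along the edges of a constant-degree \emph{quantum expander} on $d$ vertices, ``equality-violation'' rank-1 projectors onto each of the three Bell states orthogonal to $\ket{\Phi^+}$ on the pair $(q_i,q_j)$. A zero-energy state of $Q$ lifts canonically to a zero-energy state of $Q'$ by symmetrizing each logical qubit across its cloud, so completeness is exact. Conversely, one must show that any state of $Q'$ of energy at most $\eta$ (per projector) can be decoded to a state of $Q$ of energy $O(\eta)$, which is where the constant gap is either preserved or lost.

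The gap-preservation is the only real content and is the main obstacle. Concretely, if the equality projectors carry total energy at most $\delta$ on some witness, I need a quantum expander-mixing argument --- or more realistically a projection lemma in the style of Kempe--Kitaev--Regev --- to show that the witness is $O(\sqrt{\delta})$-close to the joint symmetric subspace of all clouds, and that the effective Hamiltonian obtained by restricting to this symmetric subspace is a constant-factor rescaling of $Q$. Tuning the expander's spectral gap against the constants of the projection lemma should then yield a $\QMAoH$ instance with gap $\epsilon > 0$ and bounded degree. A final cosmetic step, padding qubits of residual low degree with satisfiable dummy rank-1 projectors in the spirit of the enforcing gadget of Theorem~\ref{thm:main_theorem} (now only required to be satisfiable, not to pin down a specific state), upgrades bounded-degree to strictly $s$-regular without altering the spectrum on the logical qubits.

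The central difficulty, which my previous attempt sidestepped by working in the equi-satisfiability regime, is that classical expander mixing controls only single-variable marginals, whereas here the adversarial witness can be a coherent superposition spread across entire clouds; so developing, or importing from the quantum error-correction literature, a genuinely quantum gap-preserving expander replacement --- one that loses only a constant factor in $\epsilon_0$ --- is where the bulk of the work and essentially all of the risk in this plan sits. Without such a tool, the construction still yields the regular version of the equi-satisfiability statement but not the constant promise gap that the conjecture asserts.
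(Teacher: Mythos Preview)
The statement you are attempting to prove is labeled a \emph{conjecture} in the paper, and the paper offers no proof of it: the regular qPCP for $\qsat$ is an open problem (as is the unrestricted qPCP you take as your starting assumption). There is therefore no paper proof to compare against; your proposal must be judged on its own.

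On the merits, the construction fails already at completeness. You propose to bind the copies $q_1,\dots,q_d$ of a qubit by placing, on each expander edge $(i,j)$, the three rank-1 projectors onto the Bell states orthogonal to $\ket{\Phi^+}$. The joint null space of those three projectors on the pair $(q_i,q_j)$ is the one-dimensional span of $\ket{\Phi^+}$ itself. For $d\geq 3$ on any connected graph there is \emph{no} global state whose marginal on every edge is $\ketbra{\Phi^+}{\Phi^+}$: a qubit maximally entangled with one partner cannot be maximally entangled with another, by monogamy of entanglement. Hence your equality gadget is unsatisfiable by itself, and the ``canonical lift by symmetrizing'' you invoke---sending $\ket{\psi}$ to $\ket{\psi}^{\otimes d}$---is not a zero-energy state of your constraints (for example $\ket{0}\tensor\ket{0}=\tfrac{1}{\sqrt 2}(\ket{\Phi^+}+\ket{\Phi^-})$ has nonzero overlap with $\ket{\Phi^-}$).

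If you retreat to a single projector onto the singlet $\ket{\Psi^-}$ per edge, completeness is restored, but the joint null space over a cloud of $d$ copies is then the full $(d{+}1)$-dimensional symmetric subspace rather than a two-dimensional logical qubit, so your ``decoding'' step and the identification of the effective Hamiltonian with a rescaled $Q$ both collapse. The paper itself flags precisely this obstruction in the comparison at the end of Section~\ref{sec:proof}: ``Imposing equality between qubits is not well defined in the quantum setting, where it is a common barrier.'' Your plan does not circumvent it; it runs straight into it.
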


%\begin{conjecture}[Weak qPCP conjecture] There exists $k ,  \epsilon_{0} < \epsilon_{1}  $, for which deciding whether a rank-1 regular $\kqsat$ instance $Q$ (see Definition \ref{def:regularity_degree} for regularity), satisfies $e_{0}(Q) \leq \epsilon_{0}$ or $e_{0}(Q) \geq \epsilon_{1}$ is $\QMAH$.
%\label{con:wqpcp}
%\end{conjecture}
\begin{conjecture}[The regular qPCP for local Hamiltonians] There exist universal constants $k ,  \epsilon_{0} < \epsilon_{1}  $, for which deciding whether a regular $\klh$ instance $H$, satisfies $e_{0}(H) \leq \epsilon_{0}$ or $e_{0}(H) \geq \epsilon_{1}$ is $\QMAoH$.
\label{con:qpcp_lh}
\end{conjecture}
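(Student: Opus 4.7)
The plan is to establish Conjecture~\ref{con:qpcp_lh} by combining the (conjectured) standard quantum PCP for local Hamiltonians with a \emph{gap-preserving} degree reduction, in direct analogy with the way Theorem~\ref{thm:main_theorem} obtains a regular $\qsat$ hardness result from an unrestricted one. Concretely, start from a $\klh$ instance $H = \sum_{i=1}^{m} H_{i}$ for which distinguishing $e_{0}(H) \leq \tilde{\epsilon}_{0}$ from $e_{0}(H) \geq \tilde{\epsilon}_{1}$ is $\QMAoH$ but where some qubits may be acted on by an unbounded number of terms, and produce a regular instance $H'$ on a slightly enlarged qubit set whose ground-state energy changes by a controlled amount and whose degree is bounded by a constant depending only on $k$.

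The construction would follow the template of the enforcing-gadget reduction used in the proof of Theorem~\ref{thm:main_theorem}: each high-degree qubit $q$ is replaced by a cluster of dummy copies, one per term originally acting on $q$, and agreement between the copies is enforced by a \emph{Hamiltonian} gadget that is itself bounded-degree. The main substep is to engineer a gadget that is simultaneously (i) bounded-degree, (ii) uniquely minimized on states in which all dummy copies agree with $q$, and crucially (iii) \emph{robust}, in the sense that any state in which an $\alpha$-fraction of the copies disagrees pays an $\Omega(\alpha)$ energy penalty. Property (iii) is precisely the quantum analog of the gap-amplification used in Dinur-style degree reduction; a natural attempt is to lay the consistency terms along an expander graph on the copies and to argue, in the spirit of the detectability lemma, that the low-energy subspace of the gadget is close to the classical ``all-agreeing'' subspace. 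Chaining this with the unbounded-degree qPCP instance then yields a regular qPCP instance whose promise gap is a constant factor of $\tilde{\epsilon}_{1}-\tilde{\epsilon}_{0}$.

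The principal obstacle is substantiating property (iii). Classically, expander-based degree reduction preserves the constant promise gap because violations percolate through the expander; quantumly, the copies may be highly entangled, and the low-energy subspace of the consistency gadget need not look like a convex combination of classical agreeing states, so one loses direct access to the classical mixing argument. This difficulty is essentially the technical core of the quantum PCP conjecture itself, which is why Conjecture~\ref{con:qpcp_lh} remains open despite the existence of Theorem~\ref{thm:main_theorem} on the $\qsat$ side. I would expect any successful attack to either (a) upgrade the detectability lemma to a genuine quantum expander-mixing statement for local Hamiltonians, or (b) sidestep degree reduction entirely by proving the regular qPCP directly from a PCP composition already cast on a bounded-degree architecture. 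A natural intermediate target is the restricted case in which the consistency gadget is diagonal in the computational basis, where Dinur's classical analysis transfers almost verbatim and yields Conjecture~\ref{con:qpcp_lh} modulo only classical consistency errors.
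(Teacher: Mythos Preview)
The statement you are attempting to prove is labelled \textbf{Conjecture}~\ref{con:qpcp_lh} in the paper, and the paper does not prove it; it is introduced as one formulation of the (open) quantum PCP conjecture, and the surrounding discussion only explains how Theorem~\ref{thm:main_theorem} constrains possible refutations of it. There is therefore no ``paper's own proof'' to compare against.

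Your proposal is not a proof but an outline of a conditional reduction, and you correctly identify where it breaks. Two concrete gaps deserve emphasis. First, your starting point already \emph{assumes} the non-regular qPCP for local Hamiltonians, which is itself open; so even if the rest of the argument went through, you would obtain only that the regular and non-regular formulations are equivalent, not that Conjecture~\ref{con:qpcp_lh} holds. Second, the gap-preserving degree reduction you sketch---replacing a high-degree qubit by expander-linked copies with a robust consistency gadget---is precisely the step that has no known quantum analogue: the enforcing gadget in the proof of Theorem~\ref{thm:main_theorem} is \emph{not} gap-preserving in the PCP sense (it only preserves the $\lambda_{0}=0$ versus $\lambda_{0}\geq 1/\poly$ promise, cf.\ Lemma~\ref{le:min_eigenvalue}), so it cannot serve as the template you invoke. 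Your own final paragraph concedes that establishing property~(iii) is essentially equivalent to resolving the conjecture. In short, the proposal is an accurate survey of why the problem is hard, but it does not constitute a proof, and none is expected here.
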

The regularity assumption, stated in both conjectures, is that the degree of all the qubits is equal (see Definition~\ref{def:regularity_degree}).
The qPCP conjecture (see, for example, \cite{aharonov2013guest}) is usually defined as in Conjecture~\ref{con:qpcp_lh}, but without the regularity assumption. The first caveat in our argument is that the regularity assumption is non-standard: it is added in order to make the result by Brand\~ao and Harrow applicable. 

Clearly, Conjecture~\ref{con:qpcp_qsat} implies Conjecture~\ref{con:qpcp_lh}. The qPCP for Local Hamiltonians for $k > 2$ implies it for $k=2$~\cite{bravyi2008quantum}. A parallel result is not known for the qPCP for $\qsat$ conjecture. Such a result would simplify our statement: in order to refute the qPCP, one needs to provide an $s_{0}(k)$ with the desired properties for each $k \geq 2$. A parallel result would imply that one only needs to find such an $s_{0}$ for $k=3$ (the smallest $k$ for which the problem is $\QMAoC$). 
%In \cite{kratochvil1993one}, the Lovasz Local Lemma was used to show that there is a sudden jump in complexity with respect to $f(k)$. They also showed that $ f(k) \geq  \floor{\frac{2^{k}}{ek} }$.

Brand\~ao and Harrow proved the following.
\begin{theorem}[Adapted from \cite{brandao2013product}]  For regular $2 \mhyphen \lh$ instances, with super constant degrees, in which each pair of qubits appears in at most one term in the Hamiltonian, $e_{0}$ can be approximated up to any additive constant in $\NP$.
 \end{theorem}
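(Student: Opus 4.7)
The strategy is to reduce the approximation task to exhibiting a \emph{product state} whose energy is within any prescribed additive $\epsilon$ of the ground state energy $e_{0}$; once such a state is known to exist, membership in $\NP$ is immediate. The interaction graph here is a simple $d$-regular graph on $n$ qubits (``each pair of qubits appears in at most one term'' rules out multi-edges), and the degree $d = \omega(1)$ is super-constant. First I would normalize $H = \frac{1}{m} \sum_{e} h_{e}$ with $\|h_{e}\| \leq 1$, so that $e_{0} \in [-1,1]$ and the task becomes approximation of $e_{0}$ to an additive constant $\epsilon > 0$.

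The heart of the proof is to invoke the Brand\~ao--Harrow product-state approximation: for a normalized $2$-local Hamiltonian on a simple $d$-regular graph, there exists a product state $|\phi\rangle = \bigotimes_{i=1}^{n} |\phi_{i}\rangle$ with $\langle \phi | H | \phi \rangle \leq e_{0} + g(d)$, where $g(d) \to 0$ as $d \to \infty$ (they obtain a bound of order $(\log d / d)^{\Omega(1)}$). Since the degree is super-constant, $g(d) < \epsilon$ for all but finitely many instance sizes, and the finitely many small instances are handled by brute force. The underlying lemma is itself proved by a conditioning-and-rounding argument: start from the true entangled ground state, perform informationally complete single-qubit measurements on a carefully chosen random subset of qubits, and use monogamy of entanglement---formalized via conditional mutual information and the quantum Pinsker inequality---to conclude that after conditioning, the reduced state on a random pair of adjacent qubits is close to a product state on average. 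The regularity assumption together with the simple-graph condition are precisely what let one convert this averaged edge bound into a uniform bound on the total energy.

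With the product-state approximation in hand, the $\NP$ protocol is straightforward: the prover sends a classical description of $|\phi\rangle$ by specifying each $|\phi_{i}\rangle$ on an $\epsilon'$-net of the Bloch sphere with $\epsilon' = 1/\poly(n)$, for a total of $O(n \log n)$ bits; the verifier computes $\langle \phi | h_{e} | \phi \rangle$ term by term (each a trivial $4 \times 4$ trace) and sums over the $m = \poly(n)$ edges. Completeness follows from the existence of the near-optimal product state, and soundness is immediate because no state, product or otherwise, can achieve energy below $e_{0}$. The main obstacle will be controlling the constants and graph-parameter dependencies in the Brand\~ao--Harrow bound: one must verify that $g$ depends genuinely on the degree $d$ rather than on $n$, and that the ``at most one term per pair'' hypothesis is precisely what prevents any single edge from dominating the energy and thereby breaking the monogamy-based averaging step.
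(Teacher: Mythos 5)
Your proposal is correct and follows essentially the same route as the source: the paper itself states this theorem without proof, importing it from Brand\~ao and Harrow, and your sketch accurately reconstructs their argument (information-theoretic product-state approximation with error vanishing in the degree, followed by the standard $\NP$ protocol of sending a discretized product state). Your closing remark about the ``at most one term per pair'' hypothesis is also consistent with the paper's own discussion, namely that without it the degree could be inflated artificially by duplicating terms, which would trivialize the statement.
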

 This result refutes both the strong (for all $k$) and the weak (for $k=2$) regular qPCP conjectures for super constant degrees. 

 The requirement that each pair of qubits appears in at most one term is because, otherwise, the result would be much stronger, and refute the qPCP for local Hamiltonian: we could replace each term $H_{i}$ by $\ell = \Omega(1)$ copies of it. The new Hamiltonian $H' = \ell H$, and therefore $e_{0}(H)=e_{0}(H')$, yet, the degree of $H'$ is $\Omega(1)$. Therefore, without this requirement, the qPCP conjecture would be refuted, since $e_{0}(H)$ could be approximated up to any constant in $\NP$.
 
 %Therefore, a qPCP construction (which we assume create regular degree instances) must have the property that the degree is at most a constant (in other words, the degree must not grow with the input size).
}

\av{
\paragraph{Discussion.}
The reduction in the main proof is non-constructive in a very strong sense. The exact value of $\qf(k)$ is not known to be computable. How can the reduction produce an instance with degree $\qf(k)+2$ when $\qf(k)$ is not even known to be computable?
% The reduction uses the properties of $\ql(k)$ without the need for computing it.
The key idea is that it is guaranteed that there exists a non satisfiable $\ksqsat[\qf(k)+1]$ instance, which is manipulated in a black box manner in the reduction. This is the first non-constructive quantum reduction the author is aware of.
}
\fv{%
\paragraph{Discussion.} We find Theorem~\ref{thm:main_theorem} interesting due to the following two reasons. The first is that the proof is non-constructive in a very strong sense (in the same sense as the classical construction in Theorem~\ref{thm:kratochvil}). The exact value of $\qf(k)$ is not known to be computable, yet the reduction defines an instance with degree at most $\qf(k) + 2$. How can the reduction produce an instance with degree $\qf(k)+2$ without being able to compute $\qf(k)$?
% The reduction uses the properties of $\ql(k)$ without the need for computing it.
The key idea is that it is guaranteed that there exists a non satisfiable $\ksqsat[\qf(k)+1]$ instance, which is manipulated in a black box manner in the reduction. This is the first non-constructive quantum reduction the author is aware of.

The second reason is the properties of the construction. Generalizing known results for the class $\NP$ to the class $\QMA$ is typically a difficult task. The fact that a generalization in this case is possible, may lead to using a similar approach for other tasks. The quantum generalization uses a completely different approach for the first step of the proof (degree reduction) and a modification of the second step (modification of the enforcing gadget due to entanglement). A detailed comparison between the quantum and classical proofs (Theorem~\ref{thm:main_theorem} and Theorem~\ref{thm:kratochvil} respectively) appears at the end of Section~\ref{sec:proof}.
}
\paragraph{Open problems.}
\av{%
Since every $\ksqsat$ instance is also a $\kssat$ instance, $\qf(k) \leq f(k)$. Is $\qf(k) = f(k)$? Interestingly, all the techniques to prove a lower bound on $f(k)$ can be generalized to the quantum setting, \emph{except one}: the Lopsided Lov\'asz Local Lemma. This technique seems hard to generalize because it takes into account the signs of the literals, and it is unclear what could take that role in the quantum setting. This may suggest a way of proving that $\qf(k)<f(k)$: find a non-satisfiable $\ksqsat$ instances in the regime $s$ where the (classical) Lopsided Lov\'asz Local Lemma is the only applicable technique to prove lower bounds on $f(k)$. 

Can the jump be made truly sudden (i.e. to prove that $\ksqsat[\qf(k)+1]$ is $\QMAoC$)? Can the parameter $k$ for which Theorem~\ref{thm:main_theorem} holds be reduced? Whereas the classical result only assumes $k \geq 3$, the quantum analogue assumes $k \geq 15$.

Whether $\qf(k) < f(k)$ is one way of asking the question whether $\qsat$ is more \emph{restrictive} than $\sat$. The full version contains several other approaches that tackle this question. 
}%
\fv{%
 Is $\qf(k)=f(k)$? Similarly, is $l(k)=\ql(k)$? There are three techniques for proving a lower bound on $f(k)$: The first, given by Tovey~\cite{tovey1984simplified} showed that $f(k) \geq k$, based on Hall's marriage theorem. Using Hall's marriage theorem in the quantum setting also implies $\qf(k)\geq k$, but requires more complicated arguments~\cite{laumann09dimer}. The second technique uses the Lov\'asz Local Lemma, which was also generalized to the quantum setting (see Theorem~\ref{thm:bound_fk} and Theorem~\ref{thm:bound_qf}). The third technique uses the Lopsided Lov\'asz Local Lemma (see Theorem~\ref{thm:gebauer}) which asymptotically improves the previous bound by a multiplicative factor of $2$.
%The lower bound in Theorem~\ref{thm:gebauer} uses the lopsided Lov\'asz Local Lemma to improve by almost a multiplicative factor of 2 the lower bound in Theorem~\ref{thm:bound_fk}. 
When one uses the (original) Lov\'asz Local Lemma, the relevant parameter is the degree of a variable. When one uses the Lopsided Lov\'asz Local Lemma, the relevant parameter is the degree of a literal (a variable or its negation). Therefore, for balanced $\ksat$ instances - instances in which each variable appears the same number of times as the negation of the variable - one gets the factor 2 improvement immediately. Gebauer et al. managed to show the same improvement also for non-balanced instances. The  Lopsided Lov\'asz Local Lemma  seems hard to generalize to the quantum setting, because it is unclear what could take the role of a literal. This may suggest that $\qf(k)<f(k)$: it may be possible to find non-satisfiable $\ksqsat$ instances in the (classically satisfiable) regime $ \lfloor \frac{2^{k}}{ek} \rfloor  \leq s <  \lfloor \frac{2^{k+1}}{e(k+1)} \rfloor$. 

The second open problem is to make the jump truly sudden (i.e. to prove that $\ksqsat[\qf(k)+1]$ is $\QMAoC$); and to decrease the parameter $k$ for which Theorem~\ref{thm:main_theorem} holds: whereas the classical result only assumes $k \geq 3$, the quantum analogue assumes $k \geq 15$. See the end of Section~\ref{sec:proof} for a comparison between the quantum and the classical proofs, where we also specify the reasons for these differences.

The last open question is much broader. The question whether $\qf(k) < f(k)$ is one way of asking whether $\qsat$ is more \emph{restrictive} than $\sat$. A different approach of asking this question is the following: We say that two $\ksat$ formulas have the same \emph{structure} if the only difference between them is the signs of the variables. For example, $\phi = (x_1 \vee \bar x_2) \wedge (\bar x_2 \vee x_3)$ and $\phi' =(\bar x_1 \vee \bar x_2) \wedge ( x_2 \vee x_3)$ have the same structure. This definition can also be generalized to the quantum setting, which allows us to compare classical $\sat$ and quantum rank-1 $\qsat$ instances. For example, $\phi$ has the same structure as the rank-1 $\qsat$ instance $Q=(\Pi_1,\Pi_2)$ where $\Pi_1=\ketbra{00}{00}_{1,2} \otimes I_{3}$, $\Pi_{2}= I_{1} \tensor \frac{1}{2}(\ket{00}+\ket{11})(\bra{00}+\bra{11})_{2,3}$. 

Is there a $\ksat$ formula $\phi$ with the following properties?

\begin{enumerate}
	\item All classical formulas which have the same structure as $\phi$ are satisfiable.
	\item There exists a rank-1 $\qsat$ instance with the same structure as $\phi$ that is unsatisfiable. 
\end{enumerate}
The answer to the above question is yes\footnote{Note that if the answer was no, it would imply that $\qf(k)=f(k)$.}, by the following example (see also Figure~\ref{fig:qsat_vs_sat}): $\phi = (x_1 \vee x_2) \wedge (x_2 \vee x_3) \wedge (x_1 \vee x_3) \wedge (x_1 \vee x_3)$. $Q = (\Pi_{1},\ldots,\Pi_{4})$ where $\Pi_{i} = \ket{\psi_{i}}\bra{\psi_{i}}$ (tensored with the identity on the remaining qubit) and 
\begin{align*}
\ket{\psi_{1}}&= \frac{1}{\sqrt{2}}( \ket{01} - \ket{10} )\\
\ket{\psi_{2}}&= \frac{1}{\sqrt{2}}( \ket{01} - \ket{10} )\\
\ket{\psi_{3}}&=  \ket{00}\\
\ket{\psi_{4}}&= \ket{11}.
\label{eq:}
\end{align*}
In fact, by sampling the quantum constraints at random from the Haar measure, the instance would remain unsatisfiable with probability 1, by the Geometrization Theorem given in~\cite{laumann2009phase}.

Therefore,  $\qsat$ is more restrictive than $\sat$ in this sense, and the ``quantumness'' of the constraints makes them harder to satisfy. One may ask further questions: How generic is this phenomenon? Does entanglement play a central role, or are tensor product constraints as restrictive as entangled constraints? Can this restrictive property of $\qsat$ with respect to $\sat$ be exploited for a computational or another task?

%et us look at the following $2-\sat$ formula: $(x_1 \vee x_2) \wedge (x_1 \vee x_2) \wedge (x_1 \vee x_2) \wedge (x_1 \vee x_3) \wedge (x_2 \vee x_3)$. We say that two formulas are similar if they have the same structure in the sense that their only difference is the signs of the literals in them. For example, $\phi$ is similar to the following $2-\sat$ formula: $(\bar x_1 \vee \bar x_2) \wedge (x_1 \vee x_2) \wedge (x_1 \vee \bar x_2) \wedge (\bar x_1 \vee x_3) \wedge (x_2 \vee x_3)$. It can be easily verified that all the formulas that are similar to $\phi$ are satisfiable. Is there an example for a formula 
\tikzset{My Style/.style={}}
	\tikzstyle{rnodes} = [circle,fill=blue!20]
\begin{figure}%
\centering
\begin{subfigure}{.5\textwidth}
  \centering
\begin{tikzpicture}
  [scale=.5,auto]
  \node [rnodes] (n1) at (-5,0) {$q_{1}$};
  \node [rnodes] (n2) at (0,7)  {$q_{2}$};
  \node [rnodes] (n3) at (5,0)  {$q_{3}$};

  \foreach \from/\to/\labl in { n1/n2/$\ket{10}$, n2/n3/$\ket{11}$}
%    \draw (\from) -- to node {blah} (\to);
    \draw (\from) to node {\labl} (\to);

	%\draw (n1) to node {blah1} .. conrols (0,2) .. (n2); 
	%\draw (n1) .. controls (0,-2) .. (n2); 
	\path (n1) edge [bend left] node [My Style] {$\ket{01}$}(n3);
	\path (n1) edge [bend right] node {$\ket{10}$}(n3);

\end{tikzpicture}
  \caption{}
  \label{fig:sub1}
\end{subfigure}%
\begin{subfigure}{.5\textwidth}
  \centering
  \begin{tikzpicture}
  [scale=.5,auto]
  \node [rnodes] (n1) at (-5,0) {$q_{1}$};
  \node [rnodes] (n2) at (0,7)  {$q_{2}$};
  \node [rnodes] (n3) at (5,0)  {$q_{3}$};

  \foreach \from/\to/\labl in { n1/n2/$ \frac{1}{\sqrt{2}}( \ket{01} - \ket{10} )$, n2/n3/$ \frac{1}{\sqrt{2}}( \ket{01} - \ket{10} )$}
%    \draw (\from) -- to node {blah} (\to);
    \draw (\from) to node {\labl} (\to);

	%\draw (n1) to node {blah1} .. conrols (0,2) .. (n2); 
	%\draw (n1) .. controls (0,-2) .. (n2); 
	\path (n1) edge [bend left] node [My Style] {$\ket{00}$}(n3);
	\path (n1) edge [bend right] node {$\ket{11}$}(n3);

\end{tikzpicture}

  \caption{}
  \label{fig:sub2}
\end{subfigure}
\caption{The above two examples are $2 \mhyphen \qsat$ instances. The qubits are the nodes, and the rank-1 projectors are the edges. The state on which the rank-1 projectors project is given beside each edge. The two instances have the same structure. (\subref{fig:sub1}) is in the computational basis (hence, it is equivalent to a $2-\sat$ formula), and (\subref{fig:sub2}) is not. The assignment $\ket{000}$ satisfies (\subref{fig:sub1}). It can be verified that every $2-\sat$ formula with the same structure as (\subref{fig:sub1}) is satisfiable. On the other hand, (\subref{fig:sub2}) is unsatisfiable. }
\label{fig:qsat_vs_sat}%
\end{figure}
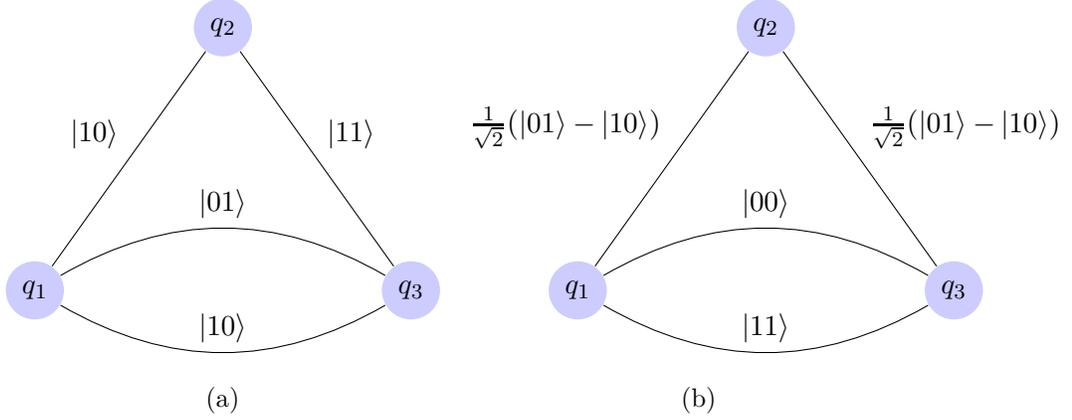
}

%\fv{
\section{Preliminaries}
\begin{definition}[$\QMAo,\QMA$] A language $L \in \QMA(s,c)$ if for every $x$ there exists a uniformly generated polynomial quantum circuit $V_{x}$ such that:
\begin{itemize}
\item(completeness) $x \in L \Rightarrow \exists \ket{\psi} $ such that $\Pr(V_{x} \text{ accepts } \ket{\psi}) \geq c $.
\item(soundness) $x \notin L \Rightarrow \forall \ket{\psi}, \  \Pr(V_{x} \text{ accepts } \ket{\psi}) \leq s $.

$\QMAo \equiv \QMA(\frac{1}{3},1)$, and $\QMA \equiv \QMA(\frac{1}{3},\frac{2}{3})$.
\label{def:qma}
\end{itemize}

\end{definition}
\begin{definition}[$\kqsat$ \cite{bravyi2006efficient}]
Input: An integer $n$, a real number $\epsilon = \Omega(1/n^{\alpha})$ for some constant $\alpha$, and a family of Hermitian projectors $\{\Pi_{1},\dots,\Pi_{m}\}$ where for all $i$, $\Pi_{i}$ acts non trivially on at most $k$ qubits out of the $n$ qubits. Let $Q=\sum_{i=1}^{m}\Pi_{i}$, and $\lambda_{0}(Q)$ be the minimal eigenvalue of $Q$.

Promise: Either $\lambda_{0}(Q)=0$ (in which case the instance is said to be satisfiable) or $\lambda_{0}(Q) \geq \epsilon$ (in which case the instance is unsatisfiable).

Problem: Decide which case it is.

An important parameter for the qPCP is $e_{0}(Q)=\lambda_{0}(Q)/m$.
\label{def:qsat}
\end{definition}

\begin{definition}[$\klh$ \cite{kitaev2002classsical}]
Input: An integer $n$, $a,b \in \mathbb{R}$ such that $|a-b| = \Omega(1/n^{\alpha})$ for some constant $\alpha$, and a family of positive Hermitian operators $\{H_{1},\dots,H_{m}\}$ where for all $i$, $H_{i}$ acts non trivially on at most $k$ qubits out of the $n$ qubits, and $||H_{i}||_{2}\leq 1$. Let $H=\sum_{i=1}^{m}H_{i}$, and $\lambda_{0}(H)$ be the minimal eigenvalue of $H$.

Promise: Either $\lambda_{0}(H)\leq a $ or $\lambda_{0}(H) \geq b$.

Problem: Decide which case it is.
\label{def:klh}
\end{definition}

%\begin{definition}[$\klh$ \cite{kitaev2002classsical}]
%Input: An integer $n$, $a,b \in \mathbb{R}$ such that $|a-b| = \Omega(1/n^{\alpha})$ for some constant $\alpha$, and a family of positive Hermitian operators $\{H_{a}\}_{a \in A}$ where $A \subseteq [n]^{k}$ (the $k^{th}$ cartesian power of $\{ 1, \ldots,n\}$), and for all $a\in A$, $H_{a}$ acts non trivially only on the subset $a$ of the qubits, and $||H_{a}||_{2}\leq 1$. Let $H=\frac{1}{|A|}\sum_{a\in A}H_{a}$, and $\lambda_{0}(H)$ be the minimal eigenvalue of $H$.

%Promise: Either $\lambda_{0}(H)\leq a $ or $\lambda_{0}(H) \geq b$.

%Problem: Decide which one is the case.
%\label{def:klh}
%\end{definition}

\begin{definition}[$d$-state $\qsat$]
Defined  as $\qsat$, except the projectors act on $d$-level quantum systems (qudits), instead of qubits (qudits with $2$ levels).
\label{def:d_state_qsat}
\end{definition}

\begin{definition}[$1$-DIM $\qsat$]
Defined  as $2-\qsat$, with the additional requirement that each projector can act non-trivially only on the $j^{\text{\tiny th}}$ and $j+1^{\text{\tiny th}}$ qubits for some $j \in \{1,\ldots,n-1 \}$.
\label{def:1_d_qsat}
\end{definition}

\begin{definition}[rank-$1$ $\kqsat$]
Defined  as $\qsat$, except the rank of each projector is one (on the $k$ qubits that the projector acts on nontrivially). In other words, each projector must have the form, $\Pi = \ketbra{\psi}{\psi} \tensor I $ for some quantum state $\ket{\psi}$.
\label{def:rank_one_qsat}
\end{definition}

%\begin{definition}[Neighborhood] Given a rank-1 $\qsat$ instance $Q$ with a set of projectors $\{\Pi_{S}\}$, let $qub(\Pi_{S})$ be the set of qubits which $\Pi_{S}$ acts on non-trivially, let the neighborhood of $\Pi_{S}$ be the set of all projectors that acts on one of the qubits that $\Pi_{S}$ acts on: $\Gamma(\Pi_{T}) = \{\Pi_{S}\in Q \setminus \{ \Pi_{T} \}| qub(\Pi_{S}) \cap qub(\Pi_{T} )\neq \emptyset \}$, let $\gamma(\Pi_{S})=|\Gamma(\Pi_{S})|$, and let the neighborhood size of an instance be:
%\[\gamma(Q) =  \max_{S \in Q} \gamma(\Pi_{S}).\]
%\label{def:neighborhood}
%\end{definition}

\begin{definition}[Regularity and Degree] Given a $\qsat$ instance $Q$ over n qubits, for $1 \leq j \leq n$ let $\Delta(j)$ be the number of projectors which act non-trivially on the $j^{th}$ qubit. We say that the instance is regular if $\forall i , j \quad  \Delta(i) = \Delta(j)$.
Let the degree of the instance be defined by:
\[ \Delta(Q)= \max_{1 \leq j \leq n} \Delta(j). \]
Regularity and degree are defined in a similar manner also for the \lh\ problem.
\label{def:regularity_degree}
\end{definition}

\begin{definition}[$\ksqsat$] Defined  as rank-$1$ \qsat, with the additional requirement that $\Delta(Q) \leq s$.
\label{def:ksqsat}
\end{definition}

%\begin{definition}
%Let $\ql(k)$ be the maximal integer such that every $\kgqsat$ instance $Q$ with $\gamma \leq \ql(k)$ is quantum satisfiable.
%\label{def:qlk}
%\end{definition}

%\begin{theorem}
%For any $k\geq 9$, it is $\QMAoC$ to decide whether a rank-1 $\kqsat$ instance with neighborhood size  at most $\ql(k)+3$ is quantum satisfiable.
%%\footnote{It should be noted that using the Quantum Lov\'{a}sz Local Lemma, $\ql(k)\geq 2^{k}/e - 1$, and therefore $\ql(k)$ is the dominating term in $max \{\ql(k)+3,k+O(1)\}$ for $k$ large enough.} $max \{\ql(k)+3,k+O(1)\}$ is quantum-satisfiable.
%\end{theorem}

\subsection{Notation}
Unless otherwise stated, all vectors $\ket{\psi}$ are normalized: $|\braket{\psi}{\psi}|=1$.  Given a multi qubit state of system $A$ and $B$ we use a subscript to denote the two systems. If the state of the $A$ system is $\ket{\alpha}$ and the $B$ system is $\ket{\beta}$, then, the entire state will be denoted as $\ket{\alpha}_A \tensor \ket{\beta}_B$. By abuse of notation, we treat a $\kqsat$ instance $Q$, both as the instance, and as the sum of all the projectors: $Q=\sum_{i} \Pi_{i}$.

\section{Proof of the main Theorem}
\label{sec:proof}
In this section we prove Theorem~\ref{thm:main_theorem}. It is already known that $\kqsat \in \QMAo$ for any constant $k$~\cite{bravyi2006efficient}, which implies that also $\ksqsat \in \QMAo$. 

Our starting point is the following result.
%We start with an arbitrary $\QMA_{1}$ instance, and we wish to reduce it to $\kqsat$ instance of neighborhood size $max \{\ql(k)+3,k+O(1)\}$. As a first step, we reduce it to  $\qsat$ instance with 1-dimensional nearest neighbors interaction for qudits of dimension $12$ using the $\QMA_{1}$-completeness result in \cite{aharonov_power_2007}.
\begin{theorem}[\cite{aharonov_power_2007,nagaj2008local}] $1$-DIM $12$-state $\qsat$ is $\QMAoC$.
\end{theorem}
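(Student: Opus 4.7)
The plan has two parts: containment in $\QMAo$ and $\QMAo$-hardness. Containment follows from the general fact that $\kqsat \in \QMAo$ for every constant $k$~\cite{bravyi2006efficient}: the verifier receives the alleged satisfying state, picks a random projector $\Pi_i$, measures $\{\Pi_i,\id-\Pi_i\}$ on the relevant qudits, and accepts on outcome $\id-\Pi_i$. A $1$-DIM $12$-state instance is a special case of $2$-local constant-dimension $\qsat$, so containment is immediate; only the hardness direction is substantive.

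For hardness, I would follow the route of~\cite{aharonov_power_2007} for $\klh$ on a line, but modified to preserve perfect completeness so that we end up in $\qsat$ rather than in \lh. Starting from an arbitrary $L\in\QMAo$ with verifier $V_x=U_T\cdots U_1$, I would build a Feynman--Kitaev-style history Hamiltonian whose low-energy states are of the form $\frac{1}{\sqrt{T+1}}\sum_{t=0}^{T}\ket{t}\tensor\ket{\psi_t}$. The terms split into the usual four families: clock-legality, input-initialization, propagation $H_{\text{prop},t}=\tfrac{1}{2}\bigl(\ket{t}\bra{t}+\ket{t{+}1}\bra{t{+}1}-U_t\tensor\ket{t{+}1}\bra{t}-U_t^{\dagger}\tensor\ket{t}\bra{t{+}1}\bigr)$, and output. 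The crucial point is that, because $L\in\QMAo$, on a yes-instance there is a witness on which every term evaluates to exactly $0$; hence each term is, after rescaling, a genuine projector, giving a true $\qsat$ instance rather than merely a local Hamiltonian.

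The remaining difficulty is geometric: placing everything on a one-dimensional chain with constant local dimension equal to $12$. Following~\cite{aharonov_power_2007}, I would replace each site by a qudit whose internal states encode a data qubit together with a few clock and control flags, and implement the computation by transition rules between adjacent qudits that sweep a ``control head'' left-to-right along the chain. Each two-qudit transition is naturally a rank-$1$ constraint forbidding one specific pair of adjacent configurations, which is exactly the $\qsat$ format; perfect completeness is preserved because a valid history violates no transition. The main obstacle I anticipate is the dimension count: compressing the local alphabet down to exactly $12$ requires careful reuse of symbols across data, control, and clock roles, and amounts to a detailed bookkeeping exercise rather than a new idea. I would invoke the construction of~\cite{nagaj2008local}, which performs exactly this optimization in the $\qsat$ (perfect completeness) setting, to close the dimension down to $12$ and finish the reduction.
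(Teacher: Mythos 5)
Your outline matches the argument the paper relies on: the paper states this theorem without proof, importing it directly from \cite{aharonov_power_2007,nagaj2008local}, and your sketch (Bravyi-style verification for containment, a perfect-completeness Feynman--Kitaev history construction laid out on a line with a sweeping control head, and the dimension-reduction bookkeeping of \cite{nagaj2008local}) is exactly the route those references take. The only minor imprecision is the claim that each transition rule is ``naturally a rank-1 constraint'' --- the projectors in the cited constructions need not be rank 1, but the theorem as stated does not require this (the paper performs the rank-1 splitting separately afterwards).
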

Therefore, we need to show a reduction from $1$-DIM $12$-state $\qsat$ to $\ksqsat[\qf(k)+2]$, for any $k\geq 15$. 

Note that the above result involves $12$-state qudits (see Def.~\ref{def:d_state_qsat}), whereas $\qf(k)$ is defined for ($2$-state) qubits (see Eq.~\eqref{def:qfk}). Therefore, we transform each qudit of dimension $12$ to $4$ qubits. Thus, the interaction becomes $k'=8$ local. Furthermore, we replace each (not necessarily rank-1) projector with at most $2^8 - 1 = 255$ rank-1 projectors, which we denote as $Q$. The instance $Q$ is $k'=8$ local, and has degree
\begin{equation}
\Delta(Q) \leq 510.
\label{eq:delta_Q}
\end{equation}

% There are two main steps in the reduction. 
%The first step is to reduce the degree of the neighborhood for some fixed $k'=8$. More precisely, We wish to start with a (not necessarily rank-1)  $\kqsat$ instance where $k$ is bounded and $\gamma$ is bounded.  \bibliographystyle{alpha}
We now construct a $\kqsat$ instance denoted $T$. For each $k'$-local projector $\Pi \in Q$, we add $k-k'$ dummy qubits; we replace $\Pi$ with the following $k$ local projector $\Pi'$, defined as follows:
\begin{equation}
\Pi'=\Pi \tensor \ket{0}\bra{0}_{dummy_{1}}\tensor \ldots \tensor \ket{0}\bra{0}_{dummy_{k-k'}}.
\label{eq:pi_prime}
\end{equation}
At this point we wish to enforce all the dummy qubits to be in the $\ket{0}$ state: this would imply that the satisfiability of the instance $Q$ has not changed due to the transformation $\Pi \rightarrow \Pi'$. This enforcing gadget will be denoted $S$, and for each dummy qubit $q$, we add all the constraints of $S(q)$ and its qubits, denoted as the ancilla qubits. In total, we add $(k-k')|S|$ qubits for each projector $\Pi'$, where $|S|$ is the number of qubits in $S$.

We are now ready to describe the enforcing gadget $S$ and its properties. A $\qsat$ instance $Q$ is {\em minimal} if it is unsatisfiable, and for every projector $\Pi \in Q$, the instance $Q \setminus \{\Pi\}$ is satisfiable. There exists a minimal $\ksqsat[\qf(k)+1]$ instance $R$: by the definition of $\qf(k)$, there exists a non satisfying $\ksqsat[\qf(k)+1]$ instance; iteratively, we remove projectors if after removing them, the instance remains non-satisfiable. Let $\Lambda \in R$ be a rank-1 projector which acts non-trivially on the first qubit and on $k-1$ other qubits, which we denote as the set $A$.  Given $\Lambda = \ketbra{\psi}{\psi}$, using the Schmidt decomposition, we can write $\ket{\psi}=\sum_{i=1}^2 \sqrt{p_i} \ket{\alpha_i}_{\text{first qubit}} \tensor \ket{\beta_i}_A$, where $p_1+p_2=1$, and $p_{1},p_{2}\geq 0$. For $i=1,2$ let $\Lambda_i= \ketbra{\beta_i}{\beta_i}_A$. Note that the two projectors $\Lambda_1, \Lambda_2$ are more restrictive than $\Lambda$ in the following sense: for every state $\ket{\varphi}$, 

\begin{equation}
\bra{\varphi} \sum_{i=1}^2 \Lambda_i \ket{\varphi} \geq \bra{\varphi} \Lambda \ket{\varphi}.
\label{eq:lambda}
\end{equation} 
%We introduce a new qubit, which we denote the {\em dummy} qubit (for reasons that will be made clear later), and replace the projector $\Lambda$ with two other $k$-local projectors $\tilde{\Lambda}_i = \Lambda_i^A \tensor \ketbra{1}{1}^{dummy}$ for $i=1,2$. We denote this new instance $S$.

We replace the projector $\Lambda$ with two other $k$-local projectors $\tilde{\Lambda}_i = \Lambda_i \tensor \ketbra{1}{1}_{dummy}$ for $i=1,2$. We denote this enforcing gadget as the instance $S$.

\begin{lemma}
The $\kqsat$ instance $S$ has the following properties:
\begin{enumerate}
\item $S$ is satisfiable by a state of the form $\ket{\psi}\tensor\ket{0}_{dummy}$. \label{it:satisfiable}
\item There exists a constant $c_k$ (which can only depend on $k$) such that for all states $\ket{\phi} = \ket{\psi} \tensor \ket{1}_{dummy}$, $\bra{\phi} S \ket{\phi} \geq c_k$. \label{it:min_energy}
%\item $S$ can be diagonalized using orthonormal eigenvectors of the form $\ket{\psi} \tensor \ket{0}^{dummy}$ and $\ket{\psi} \tensor \ket{1}^{dummy}$. \label{it:diagonalizable}
\item $\Delta(S)\leq \qf(k)+2$. \label{it:delta_s}
\end{enumerate}
\label{le:S_properties}
\end{lemma}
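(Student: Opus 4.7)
The plan is to verify each of the three properties directly from the construction of $S$ (which consists of $R\setminus\{\Lambda\}$ together with the two new projectors $\tilde\Lambda_1,\tilde\Lambda_2$), exploiting (i) the minimality of $R$, (ii) the operator inequality~\eqref{eq:lambda}, and (iii) a careful degree accounting on each qubit.

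For Property~\ref{it:satisfiable}, I would observe that any state of the form $\ket{\psi}\tensor\ket{0}_{dummy}$ automatically satisfies $\tilde\Lambda_1$ and $\tilde\Lambda_2$, since both projectors contain the factor $\ketbra{1}{1}_{dummy}$. Hence it suffices to choose $\ket{\psi}$ satisfying the remaining projectors, which form exactly the instance $R\setminus\{\Lambda\}$. By the minimality of $R$, this instance is satisfiable, so such $\ket{\psi}$ exists.

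For Property~\ref{it:min_energy}, first I would fix $\ket{\phi}=\ket{\psi}\tensor\ket{1}_{dummy}$. Because $\tilde\Lambda_i$ acts as $\Lambda_i$ on the system whenever the dummy qubit is in $\ket{1}$, the energy of $\ket{\phi}$ with respect to $S$ equals the energy of $\ket{\psi}$ under the operator $(R\setminus\{\Lambda\}) + \Lambda_1 + \Lambda_2$. Combining this with Eq.~\eqref{eq:lambda} gives the operator inequality
\begin{equation*}
(R\setminus\{\Lambda\}) + \Lambda_1 + \Lambda_2 \;\succeq\; (R\setminus\{\Lambda\}) + \Lambda \;=\; R.
\end{equation*}
Since $R$ is unsatisfiable and acts on a fixed number of qubits depending only on $k$, its minimum eigenvalue $\lambda_0(R)>0$ is a strictly positive constant determined by $k$ alone. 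Setting $c_k:=\lambda_0(R)$ yields $\bra{\phi} S \ket{\phi}\geq c_k$, as required.

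For Property~\ref{it:delta_s}, I would simply count the degree at each qubit of $S$. Every qubit of $R$ already has degree at most $\qf(k)+1$. The first qubit of $\Lambda$ loses the projector $\Lambda$ and gains nothing (since $\tilde\Lambda_i$ does not act on it), so its degree drops to at most $\qf(k)$. Each qubit of $A$ loses $\Lambda$ and gains both $\tilde\Lambda_1$ and $\tilde\Lambda_2$, a net change of $+1$, giving degree at most $\qf(k)+2$. All other system qubits are untouched, and the freshly introduced dummy qubit appears only in $\tilde\Lambda_1$ and $\tilde\Lambda_2$ and so has degree exactly $2$. Taking the maximum over all qubits yields $\Delta(S)\leq \qf(k)+2$. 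The only slightly delicate point is ensuring that the constant $c_k$ in Property~\ref{it:min_energy} is well-defined as a function of $k$ alone; this is immediate because $R$ is a fixed (though non-constructively chosen) minimal instance depending only on $k$, and its energy gap above zero is therefore a fixed positive number for that $k$.
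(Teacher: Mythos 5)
Your proof is correct and follows essentially the same route as the paper's: property 1 via minimality of $R$ plus the fact that $\ket{0}_{dummy}$ annihilates $\tilde\Lambda_1,\tilde\Lambda_2$; property 2 via the operator inequality $\Lambda_1+\Lambda_2\succeq\Lambda$ from Eq.~\eqref{eq:lambda} and taking $c_k=\lambda_0(R)>0$ for the fixed unsatisfiable instance $R$; and property 3 via the same per-qubit degree count. Your accounting is in fact slightly more explicit than the paper's (noting that the first qubit's degree actually drops and that the dummy qubit has degree exactly $2$), but there is no substantive difference.
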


\begin{proof}
\ref{it:satisfiable}. Since $R$ is minimal, $R \setminus \{\Lambda \}$ is satisfiable, and let $\ket{\psi}$ be a satisfying state for $R \setminus \{\Lambda \}$. The state $\ket{\psi} \tensor \ket{0}_{dummy}$ also satisfies $\tilde{\Lambda}_1, \tilde{\Lambda}_2$, and therefore satisfies $S$. 

\ref{it:min_energy}. Since $R$ is unsatisfiable, there exists a constant $c_k$ (which only depends on $k$) such that for every state $\ket{\psi}$, $\bra{\psi} R \ket{\psi} \geq c_k$. Note that for every state $\ket{\phi}=\ket{\psi}\tensor \ket{1}_{dummy}$, $\bra{\phi}  S \ket{\phi}  \geq \bra{\psi} R \ket{\psi}$: all the constraints that do not involve the dummy qubit are not affected, and since $\bra{\phi}\sum_{i=0}^1 \tilde{\Lambda}_i \ket{\phi} = \bra{\psi} \sum_{i=0}^1 \Lambda_i \ket{\psi} \geq \bra{\psi} \Lambda \ket{\psi}$, where the last inequality follows from Eq. \eqref{eq:lambda}.

%\ref{it:diagonalizable}. Let $\{\ket{\psi_i} \}$ be a set of orthonormal eigenvectors for $R \setminus \{\Lambda\}$. It is easy to verify that $\ket{\psi_i} \tensor \ket{0}^{dummy}$ is an eigenvector of $S$ with the same eigenvalue. Similarly, let $\{\ket{\phi_i} \}$ be a set of orthonormal eigenvectors for $R \setminus \{\Lambda\} \cup \{\Lambda_1, \Lambda_2\}$. It is easy to verify that $\ket{\phi_i} \tensor \ket{1}^{dummy}$ is an eigenvector of $S$ with the same eigenvalue. These vectors form an orthonormal basis for the entire space of $S$, including the new dummy qubit.

\ref{it:delta_s}. $\Delta(R) \leq \qf(k)+1$. By replacing $\Lambda$ with $\Lambda_{1}$ and $\Lambda_{2}$, the degree of each qubit in $\Lambda$ increases by at most 1, and the degree of the dummy qubit is 2. Therefore, $\Delta(S)\leq \qf(k)+2$.
\end{proof}

\begin{lemma}
Let $E$ be the minimal eigenvalue of $Q$, and let $E'$ be the minimal eigenvalue of $T$. If $E \leq c_{k}$, then $E'=E$, otherwise, $E' \geq c_{k}$,
%\begin{equation}
%E'=\min\{E,c_{k}\} \, ,
%\label{eq:}
%\end{equation}
where $c_{k}$ is the constant defined in Lemma \ref{le:S_properties}.
\label{le:min_eigenvalue}
\end{lemma}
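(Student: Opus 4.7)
The plan is to prove two matching bounds, $E' \leq E$ always and $E' \geq \min(E, c_{k})$, which together give both branches of the lemma: if $E \leq c_{k}$ then $E' = E$, and if $E > c_{k}$ then $E' \geq c_{k}$.

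For the upper bound I would take a ground state $\ket{\phi}$ of $Q$ (so $\bra{\phi}Q\ket{\phi} = E$) and extend it to a trial state on $T$'s Hilbert space by setting every dummy qubit to $\ket{0}$ and, for each enforcing gadget $S(d)$, placing its ancilla register in the state $\ket{\psi_d}$ guaranteed by Property~\ref{it:satisfiable} of Lemma~\ref{le:S_properties} (so that $\ket{\psi_d}\tensor\ket{0}_d$ satisfies $S(d)$). On this trial state each $\Pi' = \Pi \tensor \bigotimes_i \ket{0}\bra{0}_{d_i}$ contributes exactly $\bra{\phi}\Pi\ket{\phi}$ and each $S(d)$ contributes $0$; summing over all projectors gives $E$, so $E' \leq E$.

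For the lower bound, the key structural observation is that every projector of $T$ is diagonal on each dummy qubit: each $\Pi'$ touches the dummies only through $\ket{0}\bra{0}$ factors, and within each gadget $S$ the constraints $R' := R \setminus \{\Lambda\}$ do not involve the dummy at all, while the replacement projectors $\tilde{\Lambda}_{1}, \tilde{\Lambda}_{2}$ carry $\ket{1}\bra{1}_{dummy}$ factors. Hence $T$ is block-diagonal in the joint computational basis $\{\ket{x}\}$ of the entire dummy register, and $\lambda_{\min}(T) = \min_{x} \lambda_{\min}(T_{x})$, where $T_{x}$ is the operator induced on the original qubits together with all ancilla registers.

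For each $x$ I would further decompose $T_{x}$ as a sum of operators on disjoint subsystems: (a) the partial sum $Q_{x}$ of those $\Pi \in Q$ for which $x$ assigns $\ket{0}$ to every dummy of $\Pi$, acting on the original qubits, and (b) for each dummy $d$, an operator $M_{d}$ on the ancillas of $S(d)$ equal to $R'$ when $x_{d} = 0$ and to $R' + \Lambda_{1} + \Lambda_{2}$ when $x_{d} = 1$. Because distinct gadgets have fresh, disjoint ancilla registers (and the original qubits are disjoint from all ancillas), the terms commute and $\lambda_{\min}(T_{x}) = \lambda_{\min}(Q_{x}) + \sum_{d}\lambda_{\min}(M_{d})$. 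Each $M_{d}$ is PSD; moreover when $x_{d} = 1$ we have $M_{d} \succeq R$ by Eq.~\eqref{eq:lambda}, and $R \succeq c_{k}I$ by the unsatisfiability of $R$. If $x$ is the all-zero string then $Q_{x} = Q$ and $\lambda_{\min}(T_{x}) \geq E$; otherwise some $x_{d} = 1$ contributes $\geq c_{k}$ while the remaining terms are nonnegative, so $\lambda_{\min}(T_{x}) \geq c_{k}$. Minimising over $x$ yields $E' \geq \min(E, c_{k})$, which combined with the upper bound gives the statement. The main thing I would double-check is the block-diagonal claim --- that no projector in $T$ ever puts a coherent superposition across dummy states --- and that the freshness of the ancilla registers is what legitimises the eigenvalue additivity in step (b).
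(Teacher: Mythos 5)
Your proposal is correct and follows essentially the same route as the paper: both arguments block-diagonalize $T$ over the computational-basis configurations of the dummy qubits, bound the nonzero-Hamming-weight sectors below by $c_k$ via the unsatisfiability of $R$ and Eq.~\eqref{eq:lambda}, bound the all-zero sector by $E$, and exhibit the product trial state built from ground states of $Q$ and the satisfying gadget states to get $E' \leq E$. Your per-gadget decomposition and the explicit additivity of minimal eigenvalues over disjoint ancilla registers just spell out details the paper leaves implicit.
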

\begin{proof}
We can decompose the entire vector space to a direct sum of subspaces based on the state of the dummy qubits in the computational basis. These subspaces are invariant under $T$ because all the projectors in $T$ commute with $\sigma_{z} = \begin{pmatrix}  1 & 0 \\  0 & -1 \end{pmatrix}$. In the subspace in which the state of the dummy qubits is $\ket{x}$, where $x \in \{0,1\}^{m}$, and $m$ is the total number of dummy qubits, 
\[ (\bra{\psi} \tensor  \bra{x}) T (\ket{\psi} \tensor \ket{x}) \geq c_{k} \cdot \text{ham}(x), \]
where $ \text{ham}(x)$ is the Hamming weight of $x$. 
This inequality follows from Lemma~\ref{le:S_properties}.\ref{it:min_energy}.

Every state of the form $\ket{\Omega}=\ket{\alpha}_{work}\tensor\ket{0^{m}}_{dummy}\tensor \ket{\phi}_{ancilla}$ satisfies
\begin{equation}
\bra{\Omega}T \ket{\Omega} \geq \bra{\alpha} Q \ket{\alpha} \geq E.
\label{eq:T_G_geq_E}
\end{equation} 
Since $T$ is invariant in these subspaces, Eq.~\eqref{eq:T_G_geq_E} also holds for superposition of states of this form, i.e. states of the form $\ket{\Omega}=\sum_i a_i \ket{\alpha_i} \tensor \ket{0^m} \tensor \ket{\phi_i}$.

Let $\ket{\Omega_{0}}=\ket{\alpha_{0}}_{work}\tensor\ket{0^{m}}_{dummy}\tensor (\ket{\psi}^{\otimes m})_{ancilla}$, where $\ket{\alpha_{0}}$ is an eigenvector of $Q$ with eigenvalue $E$, and $\ket{0} _{dummy}\otimes \ket{\psi}_{ancilla}$ is a satisfying state for $S$, which is guaranteed to exist by Lemma~\ref{le:S_properties}.\ref{it:satisfiable}. This state satisfies
\[ \bra{\Omega_{0}}T \ket{\Omega_{0}} = \bra{\alpha_{0}} Q \ket{\alpha_{0}} = E.\]

 \onote{improve proof}
\end{proof}

We are now ready to complete the proof of Theorem~\ref{thm:main_theorem}. We reduce the instance $Q$ to the instance $T$, where we use $\epsilon'=\min\{ \epsilon,c_{k} \}$, where $\epsilon$ is the parameter for original instance $Q$, and $c_k$ is the parameter from Lemma~\ref{le:S_properties}. By Lemma~\ref{le:min_eigenvalue}, the minimum eigenvalue of $T$ is $0$ if $Q$ is a ``yes'' instance, and at least $\min\{ \epsilon,c_{k} \}$ if $Q$ is a ``no'' instance.
The locality of $T$ is indeed $k$.
We claim that $\Delta(T) \leq \max \{\qf(k)+2,\ 510\}$.
$\Delta(Q)\leq 510$ by Eq.~\eqref{eq:delta_Q}, hence, the degree of the qubits which originate from $Q$ is at most $510$. The degree of each dummy qubit is $3$.
Since $\Delta(S)\leq \qf(k)+2$ by Lemma~\ref{le:S_properties}.\ref{it:delta_s}, the degree of the ancilla qubits which originate from the enforcing gadgets S is at most $\qf(k)+2$.  
%Therefore, the neighborhood of the constraints arising from $Q$ is increased by exactly $2(k-k')=2k-16$ by the mapping $\Pi \rightarrow \Pi'$, and therefore the new neighborhood of the projectors $\Pi'$ is at most $2k+110$. 
%The neighborhood of the constraints arising from $S$ is increased by at most $1$ (to $\ql(k)+3$)  since each dummy qubit appears at most once in each of the projectors $\Pi'$. To conclude, $\gamma(T)\leq \max(\ql(k)+3,2k+110)$.
Since $k \geq 15$ (by the assumption of Theorem~\ref{thm:main_theorem}), it can be verified using Theorem~\ref{thm:bound_qf} that $\max\{\qf(k)+2,\ 510\} = \qf(k)+2$, therefore, $\Delta(T)\leq \qf(k)+2$  which completes the proof of Theorem~\ref{thm:main_theorem}. \qed

\paragraph{Comparison between the quantum and the classical proofs.} We now compare the above quantum proof with the classical proof of Theorem~\ref{thm:kratochvil}. There are two main steps in both proofs. The first, is to decrease $k$ and the degree to the minimal value possible. The second, is to increase $k$ without increasing the degree much above $f(k)$ in the classical setting and $\qf(k)$ in the quantum setting. 

In the first step of the classical setting, we start with a $3\mhyphen\sat$ instance (which is $\NPC$), and we replace each variable $x$ that appears $r$
 times with $x_{1}, \ldots, x_{r}$, and we add additional clauses that enforce that $x_{1}=\ldots = x_{r}$, while keeping the degree below a small constant. Imposing equality between qubits is not well defined in the quantum setting, where it is a common barrier (for example, in quantum error correcting codes). 
For this reason, we use a completely different approach:
The final Hamiltonian of the $\QMA$-hardness reduction of $\qsat$ on a line has bounded degree and locality, which are exactly the properties that are needed.  
The bottleneck for proving Theorem~\ref{thm:main_theorem} for $k$ smaller than $15$ is due to the properties of the first step: we already start with $k'=8$, and $\Delta(Q)\leq 510$.
These parameters could potentially be optimized using a different (standard) $\QMA_{1}$-completeness constructions, or a tailor made construction that minimizes $k'$ and $\Delta(Q)$. 

The second step is very similar in spirit in both the classical and quantum proofs, although it contains one crucial difference. In the classical case, one can replace a $k$ clause by a $k-1$ clause which is more restrictive, by removing an arbitrary variable from the $k$ clause\footnote{Note that we assume that each $k$ clause contains exactly $k$ \emph{different} variables. Otherwise, this statement would not hold.}. In our case, we have to replace a rank-1 $k$ local projector with \emph{two} rank-1 $k-1$ local projectors which are more restrictive (see Eq.~\eqref{eq:lambda}). The effect of this difference is that in the classical case it is $\NPH$ to decide $\kssat[f(k)+1)]$, while in the quantum case only $\ksqsat[\qf(k)+2]$ is $\QMAoH$.

\section{Acknowledgments}
The author wish to thank Martin Schwarz for suggesting the connection to the qPCP conjecture, and  Dorit Aharonov, Itai Arad, and Yosi Atia for fruitful discussions.
%}
\bibliographystyle{alphaabbr}	
\bibliography{almost_sudden_jump_in_quantum_complexity}
\end{document}